\numberwithin{equation}{section}
\newcommand{\FF}{\mathbb{F}}
\DeclareMathOperator{\Norm}{Norm}
\DeclareMathOperator{\evmap}{ev}
\newtheorem{thm}{Theorem}[section]
\newtheorem{prop}[thm]{Proposition}
\newtheorem{lem}[thm]{Lemma}
\newtheorem{cor}[thm]{Corollary}
\theoremstyle{definition}
\newtheorem{defn}[thm]{Definition}
\newtheorem{rmk}[thm]{Remark}
\newtheorem{ex}[thm]{Example}
\newtheorem*{rmk*}{Remark}
\newtheorem*{ex*}{Example}
\newcommand{\F}{\mathbb{F}}
\newcommand{\N}{\mathbb{N}}
\newcommand{\rmv}[1]{}
\newcommand\restr[2]{{
  \left.\kern-\nulldelimiterspace 
  #1 
  \vphantom{\big|} 
  \right|_{#2} 
  }}
\begin{document}

\title{Curve-lifted codes for local recovery using lines}
\author[1]{Gretchen L. Matthews\thanks{gmatthews@vt.edu }}
\author[2]{Travis Morrison\thanks{tmo@vt.edu}}
\author[3]{Aidan W. Murphy\thanks{Aidan.Murphy@jhuapl.edu \\The work of the first and second authors is supported in part by the Commonwealth Cyber Initiative. The work of the first author is supported by NSF DMS-2201075. This work was performed while the third author was at Virginia Tech. Some results on binary norm-trace codes were presented at ISIT 2021 and in the third author's dissertation.}}
\affil[1,3]{Department of Mathematics\\Virginia Tech}
\affil[2]{Johns Hopkins Applied Physics Laboratory}
\date{}

\maketitle

\begin{abstract}
In this paper, we introduce curve-lifted codes over fields of arbitrary characteristic, inspired by Hermitian-lifted codes over $\F_{2^r}$. These codes are designed for locality and availability, and their particular parameters depend on the choice of curve and its properties. Due to the construction, the numbers of rational points of intersection between curves and lines play a key role. 
To demonstrate that and generate new families of locally recoverable codes (LRCs) with high availabilty, we focus on norm-trace-lifted codes. In some cases, they are easier to define than their Hermitian counterparts and consequently have a better asymptotic bound on the code rate. 
\end{abstract}

\section{Introduction} \label{intro_section}

Algebraic geometry codes were introduced in the 1980s \cite{Goppa_83} and quickly received attention due to the existence of sequences with parameters exceeding the Gilbert-Varshamov bound \cite{TVZ}. These codes are defined by evaluating functions on a smooth projective curve over a finite field at rational points, a construction that is quite flexible allowing for customizations or variants that achieve particular goals. In \cite{barg_tamo_vladut_17}, it was demonstrated how coverings of curves yield locally recoverable codes (also known as LRCs or codes with locality). A code $C$ of length $n$ is said to be {\bf locally recoverable} if there exist recovery sets $R_1, \dots, R_n$ such that for every codeword coordinate $i$,  the $i^{th}$ coordinate $c_i$ of any codeword $c \in C$ may be recovered from the codeword symbols $c\mid_{R_i}$. The cardinality of the largest $R_i$ is called the locality of $C$. A code has availability $t$ if each coordinate has $t$ disjoint recovery sets and is called a code with availability if $t>1$.

Locally recoverable codes, whose study originated in \cite{ gopalan_12, dimakis_12}, are studied due to their applicability in setting such as distributed storage where vast amounts of data are stored across many servers which may be temporarily offline (and  is modeled as an erasure). To limit network traffic involved in recovery, it is desirable that each coordinate (or server) can be recovered using information from a small subset all other coordinates (or the rest of the network). Moreover, it is useful to have multiple ways in which information can be recovered, so that if a coordinate is lost, the ability to recover does not depend on the availability of some other coordinate which might also be lost. 

Algebraic geometry codes can be adapted to define locally recoverable codes, as noted in \cite{barg_tamo_vladut_17, fiber}. 
In this paper, we define curve-lifted codes, a generalization of the Hermitian-lifted codes \cite{lifted}, for a projective curve $\mathcal X$ over a finite field $\F_q$. Curve-lifted codes are evaluation codes in which codewords are determined by evaluating particular functions at affine points on the curve. The functions to evaluate depend on what we will refer to as an intersection number: given a curve $\mathcal X$ and a collection of lines $\mathbb L$, the intersection number is   
$$ \min \left\{  \mid \left( L \cap \mathcal X \right) \left( \F_q \right) \mid : L \in \mathbb L \right\}.$$ Codewords arise from the evaluation of rational functions $f$ on $X$ that restrict to low-degree polynomials on $\left( L \cap \mathcal X \right) \left( \F_q \right)$. Recovery sets for a position associated with a point $P$ consist of collections of other points of intersection between a line through $P$ and $\mathcal X$. The parameters of such codes depend on the collection $\mathbb L$ of lines selected and the number of $\F_q$-rational points that lie on both the curve $\mathcal X$ and a line $L \in \mathbb L$. When $\mathbb L$ is the set of all lines in affine space over $\F_q$, we refer to such a number as an intersection number. 

Intersection numbers for particular curves can be challenging to determine. We demonstrate progress in this direction for the 
{\bf  norm-trace curve} which is defined by
\[
\mathcal X_{q,r}:   y^{q^{r-1}}+ y^{q^{r-2}}+ \dots +y^q + y = x^{\frac{q^r-1}{q-1}}
\]
over the field $\F_{q^r}$ with $q^r$ elements. Taking $r=2$ gives the {\bf Hermitian curve} $H_q: y^q+y=x^{q+1}$ over $\F_{q^2}$. This allows us to give important instances of the curve-lifted construction, resulting in norm-trace-lifted codes.

Curve-lifted codes and norm-trace-lifted codes are strongly inspired by Hermitian-lifted codes. 
We note some key distinctions between the binary Hermitian case ($r=2$) and the more general norm-trace one in which $r>2$. The functions which give rise to codewords can be described explicitly for the case $r>2$, while much of the effort in studying the Hermitian case is devoted to finding enough functions to get a positive rate. It should be mentioned that lifted codes are not traditional algebraic geometry codes; while the codewords are of a similar form, obtained by evaluating functions at rational points, the space of functions is not a Riemann-Roch space. 

Determining the functions to evaluate to define a curve-lifted code is a crucial task. As we will see in the norm-trace case when $r>2$, we must first determine the appropriate locality, which depends on the number of rational points of intersection between the curve $\mathcal{X}_{q,r}$ and a line  $\F_{q^r}$. That is a primary contribution of this work. It pays off in that once this is determined, we can immediate specify enough functions which yield codewords in the norm-trace-lifted code to prove that this family of codes has positive rate even as the code length grows, for a fixed characteristic. This is in contrast to the Hermitian case (meaning $r=2$) where it is an open problem to explicitly describe all functions that give rise to codewords. 

This paper is organized as follows. Section \ref{prelim_section} reviews the necessary terminology and the Hermitian-lifted codes. We define curve-lifted codes in Section \ref{curve_lifted_section} and give examples of them. 
The cardinalities of intersections between the norm-trace curves and lines are found in Section \ref{intersection_section}, laying a foundation to define an array of norm-trace-lifted codes over arbitrary fields in Section \ref{codes_section}. The paper ends with a conclusion in Section \ref{conclusion_section}.

\section{Preliminaries} \label{prelim_section}

In this section, we review notation to be used throughout the paper as well as the background on other curve-lifted codes in the literature. Throughout, we let $q$ be a power of a prime and $r\geq 2$ be an integer. The finite field with $q$ elements is denoted $\F_q$, and the multiplicative group of its nonzero elements is denoted $\F_q^\times$.  The set of nonnegative integers is denoted by $\N$, and for a positive integer $n$, $[n]\coloneqq \{1, \dots, n\}$.

An {\bf $[n,k,d]$ linear code} $C$ over a finite field $\F_q$ is a $k$-dimensional $\F_q$-subspace of $\F_q^n$ in which any two distinct elements (called codewords) differ in at least $d$ coordinates. Such a code has {\bf length} $n$, {\bf dimension} $k$, and {\bf minimum distance} $d$. Any $d-1$ erasures in a received word may be recovered by accessing the remaining $n-d+1$ coordinates. We are interested in recovering erasures by accessing a small number of coordinates. As is standard in the erasure recovery model, we assume a received word $w$ have the form $w \in \F_q^n \cup \{ ? \}$ where there exists a codeword $c \in C$ such that $w_i \in \{ c_i, ? \}$ for all $i \in [n]$. All codes considered in this paper are linear, so we use the terms code and linear code interchangeably. 

A code $C$ of length $n$ over $\F_q$ has {\bf locality} $s$ if and only if for all $i \in  \{ 1,\dots,n\}$, there exists 
$$R_i \subseteq [n] \backslash \{ i \}$$ 
where $|R_i| \leq s$ and for all codewords $c \in C$,
$$c_i = \varphi_i (c\mid_{R_i})$$
for some function $\varphi_i : \F_q^s \rightarrow \F_q$. Note that a code with locality $s$ clearly has locality $s'$ for all $s' \geq s$. However, we typically use the term to refer to $\max \{ |R_i|: i \in [n] \}$.  The set $R_i$ is called a {\bf recovery set} for $i$. The set $\overline{R_i} \coloneqq  R_i \cup \{ i \}$ is called a {\bf repair group} for $i$. 
We say that $C$ has {\bf availability} $t$ if there exist recovery sets 
$$R_{i,1},\dots,R_{i,t} \subseteq [n] \backslash \{ i \}$$ 
for each index $i \in [n]$, such that 
$$R_{i,j} \cap R_{i,j'} = \emptyset$$ 
for all $j \neq j'$. 

The codes considered in this paper are reminiscent of algebraic geometry codes, in that they are defined using rational points and functions on curves over finite fields. 
An {\bf algebraic geometry code} $C(D,G)$ of length $n$ over a finite field $\F$ is defined by fixing a curve $\mathcal X$ over $\F$ along with divisors $G$ and $D$ on $\mathcal X$ so that their supports are disjoint and $D$ is the sum of $\F$-rational points $P_1, \dots, P_n$. Each codeword is of the form $(f(P_1), \dots, f(P_n))$ where $f$ is a function in the Riemann-Roch space of $G$. If $D$ is not specified, it is taken to be the sum of all $\F$-rational points other than those in the support of $G$. 
For example, Reed-Solomon codes  are algebraic geometry codes on projective lines. The next most commonly studied algebraic geometry codes are Hermitian codes. Hermitian codes are special cases of norm-trace codes, meaning algebraic geometry codes defined on norm-trace curves. Norm-trace codes were first defined by Geil \cite{geil_03}. They are defined by evaluating functions from Riemann-Roch spaces at rational points on $$\mathcal X_{q,r}: N(x)=Tr(y)$$ over the field $\F_{q^r}$ with $q^r$ elements; here and throughout, the norm and trace will be considered with respect to the extension $\F_{q^r}/\F_q$, so that for any $a \in  \F_{q^r}$, they are 
$$\Norm(a) \coloneqq  a^{\frac{q^r-1}{q-1}}, \Tr(a)\coloneqq \sum_{i=0}^{r-1} 
a^{q^{i}} \in \F_q.$$
 Note that $\mathcal X_{q,r}$ has genus $$g=\frac{1}{2}\left( \frac{q^r-1}{q-1} -1\right)\left( q^{r-1}-1\right).$$ 
 Let $\mathcal X_{q,r}(\FF_{q^r})$ denote the set of $\F_{q^r}$-rational point on the curve $\mathcal X_{q,r}$. For each $a \in \F_{q^r}$, there are $q^{r-1}$ elements $b \in \F_{q^r}$ such that $\Norm(a)=\Tr(b)$, each giving rise to a rational point $(a,b) \in  \mathcal X_{q,r}(\FF_{q^r})$, referred to as an affine point of $ \mathcal X_{q,r}$. In addition, there is a unique point at infinity, $P_{\infty} \in \mathcal X_{q,r}(\FF_{q^r})$. Hence, $\left| \mathcal X_{q,r}(\FF_{q^r}) \right| = q^rq^{r-1}+1=q^{2r-1}+1$. Consider the vector space of functions $V \subseteq \cup_{m=0}^{\infty} \mathcal L(mP_{\infty})$ with no poles other than at $P_{\infty}$, where $P_{\infty}\coloneqq (0:1:0)$ is the unique point at infinity on $\mathcal{X}_{q,r}$.  It is worth noting that the Riemann-Roch space of the divisor $mP_{\infty}$ is 
\begin{equation} \label{nt_RR}
\mathcal L(mP_{\infty}) = \left< x^i y^j : 0 \leq i \leq \frac{q^r-1}{q-1}-1, iq^{r-1}+j\frac{q^r-1}{q-1} \leq m\right> \subseteq \F_{q^r}[x,y].
\end{equation}
We denote the set of polynomials with coefficients in $\F_{q^r}$ and indeterminates $x, y$ of total degree at most $k$ by $\F_{q^r}[x,y]_{\leq k}$. A Hermitian code is an algebraic geometry code over the Hermitian curve. For convenience, we will identify rational functions of the form  $\frac{f(x,y)}{z^{\deg f}}$ where $f(x,y) \in \F[x,y]$ with the polynomial $f(x,y) \in \F[x,y]$.

A Hermitian code with $m \leq q^2-1$ over $\F_{q^2}$ has locality $q$ and availability $q^2-1$ \cite{lifted}. To see this, note that each affine $\F_{q^2}$-rational point $P$ on the Hermitian curve $\mathcal H_q$ has the property that any non-tangent line to $\mathcal H_q$ intersects the curve in $q+1$ $\F_{q^2}$-rational points and there are $q^2-1$ such lines. Because any function $f$ in the Riemann-Roch space $\mathcal L(mP_{\infty})$ may be expressed as $f(x,y)=\sum_{i=0}^{q-1} \sum_{j=0}^{\lfloor \frac{m-iq}{q+1} \rfloor} x^i y^j$, such a function restricted to a non-tangent line can be viewed as a univariate polynomial of degree at most $q-1$. To recover an erasure at point $P_{ab}$, one may treat the word corresponding to the $\F_{q^2}$-rational points on any line through $P_{ab}$ as a Reed-Solomon codeword. Thus, the set of $\F_{q^2}$-rational points on any non-tangent line through $P_{ab}$, other than $P_{ab}$ itself, form a recovery set for the coordinate corresponding to $P_{ab}$, demonstrating that the Hermitian code $C(mP_{\infty}, D)$  over $\F_{q^2}$ has locality $q$ and availability $q^2-1$. Unfortunately, the rate of these codes approaches $0$ as $q$ goes to infinity. 

Lifting is a mechanism introduced to increase the rate of codes while maintaining desirable properties. Hermitian-lifted codes over binary fields were introduced in \cite{lifted} and yield a family of codes with a positive lower bound on the rate as $q$ goes to infinity. More recent work with an improved bound on the rate of Hermitian-lifted codes appears in \cite{allen2023improving}.

\section{Curve-lifted codes} \label{curve_lifted_section}

In this section, we present curve-lifted codes, which are evaluation codes whose codewords arise from functions that restrict to low degree polynomials on a collection of lines through the curve. In the first subsection, we develop the notion in this section and consider some examples. As we will see, intersection numbers will be a necessary ingredient, and they will be further explored in later sections. In the second subsection, we drill down to better describe the defining sets of functions for particular curves. 

\subsection{Construction}

Consider a projective curve $\mathcal X$ given by $F(x,y)=0$ over the finite field $\F_{q}$. Take an $\F_{q}$-rational point $P \in \mathcal X(\F_{q})$  and enumerate the other $\F_{q}$-rational points on $\mathcal X$: $P_1,\dots, P_n$. Set $D=P_1+\dots+P_n$, and let $V \subseteq  \F_{q}(\mathcal X)$ be a set of rational function on $\mathcal X$ with no poles among $P_1, \dots, P_n$. Consider the map
$$
\begin{array}{lccc}
\evmap_D \colon &V &\to &\F_{q}^n \\
&f &\mapsto &(f(P_1), \dots, f(P_n)).
\end{array}.
$$
We define an {\bf evaluation code} $C(V, D)$ to be the image of the map
$\evmap_D$. Its properties will depend on $V$ and $\{ P_1, \dots, P_n \}$. 

We will be interested in non-horizontal lines $$L_{\alpha,\beta}(x,y):y=\alpha x + \beta$$ with $\alpha,\beta \in \FF_{q}$, so $\alpha \neq 0$. Let $$\mathbb L_{q}\coloneqq  \left\{ L_{\alpha,\beta}: \alpha, \beta \in \F_{q}, \alpha \neq 0 \right\}.$$We consider which rational functions $f \in \F_q(\mathcal X)$ restrict to certain polynomials  on the intersections of particular lines and the curve $\mathcal X$. To do so, let 
$$m_{\alpha, \beta, F}(x)\coloneqq F(x,\alpha x + \beta).$$
When $F$ is clear from the context, we may write $m_{\alpha, \beta}(x)$ for $m_{\alpha, \beta,F}(x)$. It is relevant to consider functions $f$ on $\mathcal X$ modulo the polynomial $m_{\alpha, \beta}$. More precisely, we use the following notion. 
\begin{defn}
For a polynomial $f(t) \in \F_{q}[t]$, define $\bar{f}_{\alpha, \beta}(t)$ to be the remainder resulting upon division of $f(t)$ by $m_{\alpha, \beta}(t)$; that is, 
$$\bar{f}_{\alpha, \beta}(t) \coloneqq f(t) \text{ mod } m_{\alpha, \beta}(t).$$
Set
$$\deg_{\alpha, \beta}(f) \coloneqq \deg(\bar{f}_{\alpha, \beta}(t)).$$
\end{defn}
Note that
\[
\deg \left( \bar{f}_{\alpha, \beta}(t) \right) \leq \deg \left( m_{\alpha, \beta} \right)-1
\]
for all $f \in \F_{q^r}[t]$. We also write $f \equiv_{\alpha,\beta} h$ to mean $f \equiv h \mod m_{\alpha,\beta}$, omitting subscripts if they are clear from the context.

Consider for each point $P_i$, $i \in [n]$, the set of lines $\mathbb L_i$ containing the point $P_i$. Let $$B \leq \min \left\{  \mid \left( L \cap \mathcal{X}\right)(\F_{q}) \mid : L \in \mathbb L_i, i \in [n] \right\}.$$
Notice that any line $L$ through $P_i$ intersects $\mathcal{X}$ at at least $B-1$ other $\F_q$-points. 

\begin{defn} \label{curve_lifted_def}
Given  a curve $\mathcal X$ over $\F_{q}$ with divisor $D\coloneqq P_1+\dots +P_n$ supported by $n$ distinct $\F_q$-rational points, a collection of lines $\mathbb L \subseteq \mathbb L_q$, and an integer $B$, the associated  curve-lifted code is $C(D, \mathcal F_{\mathbb L, B})$ where 
\begin{equation} 
\mathcal{F}_{\mathbb L, B} \coloneqq  \left\lbrace f \in \F_{q}\left(\mathcal X \right) : \exists g \in \F_{q}[t]_{\leq B-2} \text{ with }  f \circ L \equiv g \text{ }\forall L \in \mathbb{L}  \right\rbrace.
\end{equation}
\end{defn}

According to Definition \ref{curve_lifted_def}, the codewords in a curve-lifted code are obtained by evaluating at each point in the support of $D$ functions which restrict on all lines in $\mathbb L$ to low-degree polynomials. In the next result, we see that this construction provides locality and availability. 

\begin{prop} \label{prop_B}
The curve-lifted code $C(D, \mathcal F_{\mathbb L, B})$ is a  code of length $n \leq |\mathcal X \left( \F_{q} \right) |$ over $\F_{q}$ with locality $B-1$ and availability $q-1$. 
\end{prop}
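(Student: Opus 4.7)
The plan is to verify the three parameters in turn. The length bound $n \leq |\mathcal{X}(\F_q)|$ is immediate, since $D = P_1 + \dots + P_n$ is supported on $n$ distinct $\F_q$-rational points of $\mathcal{X}$.

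For locality, fix any index $i \in [n]$ and select a line $L = L_{\alpha,\beta} \in \mathbb{L}$ through $P_i = (a_i, b_i)$. By the defining inequality on $B$, the intersection $(L \cap \mathcal{X})(\F_q)$ contains $P_i$ together with at least $B-1$ other $\F_q$-rational points; choose $B-1$ of these to form $R_i$. For any $f \in \mathcal{F}_{\mathbb{L}, B}$, Definition \ref{curve_lifted_def} provides $g \in \F_q[t]_{\leq B-2}$ with $f \circ L \equiv g \pmod{m_{\alpha,\beta}}$. Each point $P_j = (x_j, \alpha x_j + \beta) \in (L \cap \mathcal{X})(\F_q)$ satisfies $m_{\alpha,\beta}(x_j) = F(P_j) = 0$, so evaluating the congruence at $x_j$ gives $f(P_j) = g(x_j)$. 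The $B-1$ values $\{f(P_j) : P_j \in R_i\}$ therefore determine $g$ uniquely by Lagrange interpolation (since $\deg g \leq B-2$ and the $x_j$ are distinct), and then $f(P_i) = g(a_i)$. Thus $R_i$ is a valid recovery set of size $B-1$.

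For availability, observe that $\mathbb{L}_q$ contains exactly $q-1$ lines through $P_i$, one per slope $\alpha \in \F_q^\times$, namely $L^{(\alpha)} \coloneqq L_{\alpha,\, b_i - \alpha a_i}$. Taking $\mathbb{L}$ to contain all such lines (as is natural under the definition, and needed for the claimed availability), the previous locality argument yields a recovery set $R_i^{(\alpha)}$ for each. Two distinct non-parallel affine lines meet in at most one point, so any pair $L^{(\alpha)}, L^{(\alpha')}$ with $\alpha \neq \alpha'$ intersect only at $P_i$, and $P_i$ is excluded from each $R_i^{(\alpha)}$ by construction; hence the $q-1$ recovery sets are pairwise disjoint.

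The only potentially delicate step is ensuring that $f \circ L$ is a bona fide polynomial in $t$, so that reduction modulo $m_{\alpha,\beta}$ is well defined. This is handled by the polynomial-representative convention for rational functions on $\mathcal{X}$ set up earlier in the preliminaries, together with the fact that $L \in \mathbb{L}$ is not contained in $\mathcal{X}$. Once granted, the whole argument reduces to Reed--Solomon--style interpolation along each line, and I anticipate no substantive obstacle.
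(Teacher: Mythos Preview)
Your proof is correct and follows essentially the same approach as the paper: pick a line through $P_i$, use the degree bound $\deg g \leq B-2$ to interpolate $g$ from $B-1$ known values along the line, recover $f(P_i)=g(a_i)$, and observe that distinct lines through $P_i$ meet only at $P_i$ to get $q-1$ disjoint recovery sets. You are somewhat more explicit than the paper in justifying why the congruence $f\circ L\equiv g \pmod{m_{\alpha,\beta}}$ yields equality at points of $(L\cap\mathcal X)(\F_q)$, and in flagging that the availability claim of $q-1$ implicitly requires $\mathbb L$ to contain all $q-1$ non-horizontal lines through each $P_i$; both are fair clarifications.
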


\begin{proof}
For $i \in [n]$, consider the set of lines $\mathbb L_i$ through the $\F_{q}$-rational point point $P_i$ on $\mathcal{X}$. By definition, any  line $L \in \mathbb L$ that contains $P_i$ intersects $\mathcal{X}$ in at least $B-1$ other points among the $P_j, j \in [n]\setminus \{ i \}$. Let $R_{i,L} \subseteq \left( L \cap \mathcal X\right) \left( \F_{q} \right) \setminus \{ P_i \}$
such that $| \mathcal R_{i,L} |  = B -1$. 

Consider a received word $w$ resulting from $\evmap(f)$ in which there is an erasure in the coordinate corresponding  $P_i$. We claim that $R_{i,L}$ is a recovery set for position $i$, for all $L \in \mathbb L_i$. To demonstrate this fact, we must determine from $R_{i,L}$ the value $f(P_i)$. Observe that for each of the points in the set  $R_{i,L}\coloneqq \left\{ P_{j_1}, \dots P_{j_{B-1}} \right\}$, the value $f(P_{j_t})$ is known. Since $f\mid_L = g$, the values $g \left( P_{j_1}\right), \dots, g \left( P_{j_{B-1}}\right)$  are known. Because $\deg g \leq B-2$, the polynomial $g$ may be found by interpolation using the $B-1$ values $g \left( P_{j_1}\right),  \dots, g \left( P_{j_{B-1}}\right)$. Then $f(P_i)=g(P_i)$. Hence $R_{i,L}$ is a recovery set for $i$. 
Moreover, the intersection of any two such lines $L$ and $L'$ satisfies 
$$L \cap L' = \{ P_i\}.$$ Thus, the sets 
$R_{i,L}$, $L \in \mathbb L_i$ are disjoint recovery sets indicating that $C$ has availabilty $q-1$.
\end{proof}

\rmv{
The dimension of the code $C(D, \mathcal F_{\mathbb L, B})$ depends on $\mathcal F_{\mathbb L, B}$, which is influenced by the set $\mathcal L$ and the value $B$.}

\begin{ex}
Consider taking $\mathcal X = \mathcal H_q$, the Hermitian curve over $\F_{q^2}$ where $q$
is even. According to \cite{lifted}, $B=q+1$. Hence, Proposition \ref{prop_B} states such codes have length $q^3$, locality $B-1=q$, and availability $q^2-1$. These are precisely the Hermitian-lifted codes considered in \cite{lifted}. 
\end{ex}

\begin{ex} \label{ex_q_3}
In this example, we consider norm-trace-lifted codes over fields $\F_{3^r}$ for small values of $r$. When $r=3$, we have the curve $$\mathcal X_{3,3}: y^9+y^3+y=x^{13}$$  which has genus $48$ and $243$ $\F_{27}$-rational points other than $P_{\infty}$. Using \cite{magma}, we see that each line in affine space over $\F_{27}$ intersects the curve in either $7$, $10$, or $13$ $\F_{27}$-rational points. Taking 
$$B = \min \{ 7, 10, 13 \} =7$$ and $\mathbb L = \mathbb L_{27}$ gives 
$$
\mathcal{F}_{\mathbb L_{27}, 7} \coloneqq  \left\lbrace f \in \F_{27}\left(\mathcal X_{3,3} \right) : \exists g \in \F_{27}[t]_{\leq 5} \text{ with }  f \circ L \equiv g \text{ }\forall L \in \mathbb{L}  \right\rbrace
$$
and the code $C(D, \mathcal F_{\mathbb L, 7})$ of length $243$ with locality $6$ and availability $3^3-1=26$. Codewords are of the form $\evmap_D(f)$ where $\deg f\mid_{(L \cap \mathcal X_{3,3})(\F_{27})} \leq 5$ for all lines $L$ over $\F_{27}$.  As a result, an erasure can be recovered by utilizing only $6$ of the other $242$ coordinates and in $26$ different (disjoint) ways, as each non-horizontal line forms a repair group for each point of intersection with $\mathcal X_{3,3}$. We may also note that $$\left< x^ay^b: a + b \leq 5 \right> \subseteq \mathcal{F}_{\mathbb L_{27}, 7},$$ so $$\left< \evmap_D(x^ay^b) : a + b \leq 5 \right> \subseteq C(D,\mathcal{F}_{\mathbb L_{27}, 7}).$$ The containment may be strict as in the Hermitian case; it will be further explored in Section \ref{codes_section}. 

We may also consider taking a proper subset of lines. Calculating intersection numbers \cite{magma}, we see that 
for every point $(a,b) \in  X_{3,3}(\F_{27})$ with $a \neq 0$, there are
    \begin{enumerate}[label=\alph*.,start=1]
\item $6$ lines through $(a,b)$ meeting $\mathcal X_{3,3}$ in $13$ points
\item  $10$ lines through $(a,b)$ meeting $\mathcal X_{3,3}$ in $7$ points
\item 10 lines through $(a,b)$ meeting $\mathcal X_{3,3}$ in $10$ points. 
    \end{enumerate}
For the points $(0,b) \in  X_{3,3}(\F_{27})$, there are 
\begin{enumerate}[label=\alph*.,start=4]
\item $13$ lines through $(0,b)$ meeting $\mathcal X_{3,3}$ in $13$ points
\item $13$ lines through $(0,b)$ meeting $\mathcal X_{3,3}$ in $7$ points. 
\end{enumerate}
Taking $\mathbb L$ to be the sets of lines in a. and d. above and setting $B=13$, Proposition \ref{prop_B} applies to give a code $C(D,\mathcal F_{\mathcal L, 13})$ which has length $243$, locality $12$, and availability $\min \{ 6, 13\}=6$. Notice that the code $C(D,\mathcal F_{\mathcal L, 13})$ includes codewords defined by functions that become polynomials of degree at most $11$ when restricted to the lines in a. and d. In particular, 
$$\left< \evmap_D(x^ay^b) : a + b \leq 11 \right> \subseteq C(D,\mathcal{F}_{\mathbb L_{27}, 13}).$$ This suggests that taking only lines which intersect the curve in more points ($13$ opposed to $7$) yields codes of larger dimension, a fact that will be considered in Section \ref{codes_section}.

Taking instead $r=4$, we have the curve $$\mathcal X_{3,4}: y^{27}+y^9+y^3+y=x^{40}$$ over $\F_{81}$  which has genus $507$ and $2187$ affine $\F_{81}$-rational points. Computations \cite{magma} indicate that each line in affine space of $\F_{81}$ intersects $\mathcal X_{3,4}$ in $22$, $28$, or $31$ $\F_{81}$-rational points. Thus, we take 
$$B= \min \left\{ 22, 28, 31 \right\}=22$$
and note that
$$
\left< x^ay^b: a + b \leq 20 \right> \subseteq
\mathcal{F}_{\mathbb L_{81}, 22} = \left\lbrace f \in \F_{81}\left(\mathcal X_{3,4} \right) : \exists g \in \F_{81}[t]_{\leq 20} \text{ with }  f \circ L \equiv g \text{ }\forall L \in \mathbb{L}  \right\rbrace.
$$
Hence, $C(D,\mathcal F_{\mathbb L_{81},22})$ is a code over $\F_{81}$ of length $2187$, locality $21$, and availability $80$. It follows that an erasure can be recovered by utilizing only $20$ of the $2186$ other coordinates and in $80$ different (disjoint) ways.
\end{ex}

\begin{ex} \label{ex_quot_herm}
In this example, we consider the curve $\mathcal X$ given by $y^8+y=x^3$ over $\F_{64}$, from the first family of non-classical curves described by Schmidt \cite{Schmidt}. One may note that $\mathcal X$ is maximal \cite{Garcia_Vianna} and a so-called Castle curve \cite{Castle}. Note that $\mathcal X$  which has genus $7$ and $176$ $\F_{64}$-rational points other than $P_{\infty}$. Using \cite{magma}, we see that each line in affine space over $\F_{64}$ intersects the curve in either $1$, $2$, $4$, or $7$ $\F_{64}$-rational points. Taking 
$$B = \min \{ 1, 2, 4, 7 \} =1$$ and $\mathbb L = \mathbb L_{64}$ gives 
$$
\mathcal{F}_{\mathbb L_{64}, 1} =\emptyset.
 $$
Thus, to obtain curve-lifted codes on $\mathcal X$, we must take a proper subset of lines in $\mathbb L_{64}$. Calculating intersection numbers \cite{magma}, we see that
there are $168$ points $P_1, \dots, P_{168}$ such that for each $P_i$, $i \in [168]$, there are \begin{enumerate}[label=\alph*.,start=1]
\item $3$ lines whose only point of intersection with the curve $\mathcal X$ in $P_i$, 
\item  $12$ lines  through $P_i$ meeting $\mathcal X$ in 2 points, 
\item  $11$ lines  through $P_i$ meeting $\mathcal X$ in 3 points,
\item  $30$ lines  through $P_i$ meeting $\mathcal X$ in 4 points, and 
\item  $7$ lines  through $P_i$ meeting $\mathcal X$ in 7 points.
\end{enumerate}
In addition, there are $8$ points $P_{169}, \dots, P_{176}$ such that for each $P_i$, $i \in \left\{ 169, \dots, 176 \right\}$, there are \begin{enumerate}[label=\alph*.,start=6]
\item $21$ lines  through $P_i$ meeting $\mathcal X$ in 3 points and 
\item $42$ lines meeting the curve $\mathcal X$ in $4$ points. 
\end{enumerate}
To design a curve-lifted code on $\mathcal X$ with locality $3$, one could take the set $\mathbb L$ to consist of those lines in d. and g. above. In this case, the code $C(P_1+\dots+P_{176}, \mathcal F_{\mathbb L, 4})$  has length $176$ over $\F_{64}$, locality $3$, and availability $\min \{ 30, 42 \} = 30$. It is formed by taking those functions $f \in \mathbb F_{64}(\mathcal X)$ that reduce to quadratics on the intersection of each line $L \in \mathbb L$ and the curve $\mathcal X$. 

To increase the dimension, we might consider increasing the locality. However, there are no lines  which contain any of the points $P_i$, $i \in \{ 169, \dots, 176 \}$, and intersect the curve in more than $4$ points. Consequently, Definition \ref{curve_lifted_def} does not support the design of a length $176$ code over $\F_{64}$ which has locality greater than $3$. Even so, we note that each of the points $P_i$, $i \in [168]$, lies on $7$ lines that intersect the curve in $7$ points as specified in e. above. The Proposition \ref{prop_B} implies that 
$C(P_1+\dots+P_{168}, \mathcal F_{\mathbb L, 7})$  has length $168$ over $\F_{64}$, locality $5$, and availability $7$.
\end{ex}

As these examples show, a key component of these curve-lifted constructions is the determination of the number of points of intersection between the curve and lines. In Examples \ref{ex_q_3} and \ref{ex_quot_herm}, computational tools were used to determine them for particular  curves over specific fields. In order to determine infinite families of such codes, we need more sophisticated theoretical tools. We will demonstrate that in the next section where we bound intersection numbers of sets of lines on the norm-trace curve. 

One may also note that the traditional code parameters dimension and minimum distance are absent in Proposition \ref{prop_B}. Because locally recoverable codes are designed for erasure recovery using small sets of other coordinates, the minimum distance is not as relevant as in standard error correction or recovery of collections of erasures using the entire received word. However, rates for families of LRCs provide a useful gauge of their capabilities. To examine code rates, we will also need the more information on particular curves. We make some headway on this front in the next subsection.

\begin{defn}
A monomial $M_{a,b}(x,y)$ is said to be {\bf good} for $\mathcal F_{\mathbb L, B}$  if for all lines $L_{\alpha, \beta} \in \mathbb{L}$,
$$\deg_{\alpha, \beta}(M_{a,b} \circ L_{\alpha, \beta}) \leq B-2.$$ 
\end{defn}
We may simply say that a monomial is good if the set of lines $\mathbb L$ and integer $B$ are clear from the context. Some monomials are good regardless of the choice of $\mathbb L$. For instance, $M_{a,b}(x,y)$ is good for $\mathcal{F}_{\mathbb L, B}$ for each 
$(a,b) \in \N^2$ with $a+b \leq B-2$, for all $\mathbb L \subseteq \mathbb L_q$

\subsection{Sporadic monomials}

In this subsection, we study monomials $x^ay^b$ which are good but not simply because $a+b$ is small enough (as mentioned above). This notion is made precise in the following definition.

\begin{defn}
A monomial $x^a y^b$ is called {\bf sporadic} for $\mathcal{F}_{\mathbb L, B}$ if it is good for $\mathcal{F}_{\mathbb L, B}$ and $a + b \geq B-1$. A monomial good for $\mathcal{F}_{\mathbb L, B}$ is called {\bf typical} if it is not sporadic. 
\end{defn}

\begin{ex}
Recall that Hermitian-lifted codes have locality $q$ and we may think of them as $C(D, \mathcal F_{\mathbb L_{q^2}, q+1})$ where $D=P_1+\dots+P_{q^3}$ is supported by the $\F_{q^2}$-rational points on $X_{q,2}: y^q+y=x^{q+1}$. The set of typical monomials is
$$\left\{ x^ay^b : a+b \leq q-1 \right\}.$$

Loosely speaking, Hermitian-lifted codes are defined with two sets of  monomials $x^a y^b$: those with $a + b \leq q-1$, meaning degree less than the locality which are always good, and some with $a + b \geq q$ that happen to reduce to those of degree less than locality on all lines. The monomials in the latter set were called sporadic, since their behavior is not yet fully understood, meaning to date, only some of them have been described explicitly. For instance, according to \cite[Theorem 10]{lifted}, 
$$\left\{ x^ay^b: \begin{array}{l} a \leq q-1, b \leq q^2-1, a+b\geq q, \exists l, i \in [l], 0 \leq s \leq i-1  \textnormal{ so that }
b=wq+b' \\ \textnormal{ with } b'< 2^{l-1}, 2^i \mid w,  a < 2^{l-1}, \textnormal{ and no } 2^s \textnormal{ term in binary expansions of } a \textnormal{ and }b' \end{array} \right\}$$ is a subset of sporadic monomials. 
\end{ex}

In the Hermitian-lifted case, accounting for the sporadic monomials is necessary to determine that the codes have a rate bounded away from $0$ as the code length grows, as demonstrated in \cite{lifted}. We will see that the $r>2$ case for the norm-trace-lifted codes is quite different. While there may be sporadic monomials, due to the larger locality (or $B$ value used), enough typical monomials may be found to demonstrate an even better bound on the asymptotic rate.  To better understand the functions that define codewords of curve-lifted codes, we will use the following observation.

\begin{lem}\label{lemma:monomial_remainder_degree}
    Let $f(x)=x^m + g(x)$ where $\deg g < m$. Assume $d\geq m$. Then the remainder of $x^d$ after division by $f$ has degree at least $\deg g$. 
\end{lem}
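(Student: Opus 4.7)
The plan is to proceed by induction on $d \geq m$. For the base case $d = m$, polynomial division by $f(x) = x^m + g(x)$ yields $x^m = 1 \cdot f(x) - g(x)$, so the remainder of $x^m$ modulo $f$ is $-g(x)$, whose degree equals $\deg g$ exactly.

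For the inductive step, suppose the remainder $r_d(x)$ of $x^d$ modulo $f$ satisfies $\deg r_d \geq \deg g$, and consider $x^{d+1} \equiv x \cdot r_d(x) \pmod{f}$. If $\deg r_d \leq m-2$, then $x \cdot r_d(x)$ has degree at most $m - 1 < m = \deg f$, so it is itself the remainder $r_{d+1}$, and its degree has only increased by one, preserving the bound. If instead $\deg r_d = m - 1$, I write $r_d(x) = c\, x^{m-1} + h(x)$ with $c \neq 0$ and $\deg h \leq m - 2$; then
$$x \cdot r_d(x) = c\, x^m + x\, h(x) \equiv -c\, g(x) + x\, h(x) \pmod{f},$$
and since this expression has degree at most $m - 1$, it is $r_{d+1}$.

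The main obstacle is verifying that $\deg r_{d+1} \geq \deg g$ in this last subcase, because the term $-c\, g(x)$ contributes a nonzero coefficient at degree exactly $\deg g$ while $x\, h(x)$ contributes at degrees up to $\deg h + 1$. When $\deg h + 1 < \deg g$, only $-c\, g(x)$ contributes at degree $\deg g$, giving $\deg r_{d+1} = \deg g$ immediately. The delicate situation is when $\deg h + 1 \geq \deg g$, since then $x\, h(x)$ and $-c\, g(x)$ could in principle cancel at the level of their degree-$\deg g$ coefficients. My plan for this subcase is to strengthen the inductive hypothesis so that it carries along additional structural information about $r_d$ — for instance, explicit control on the coefficient of $x^{\deg g}$ inherited from the initial reduction $x^m \equiv -g(x)$ — and to propagate this information through each multiplication-by-$x$ step so that any accidental cancellation at the critical degree is precluded.
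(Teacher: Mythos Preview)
Your induction is fine through the easy subcases, but it halts exactly where the content lies. In the case $\deg r_d=m-1$, writing $r_d=cx^{m-1}+h$, you reduce to $r_{d+1}=-cg+xh$ and note that when $\deg(xh)=\deg g$ the two leading terms may cancel; you then defer this to an unspecified ``strengthened inductive hypothesis.'' That plan cannot be carried out, because the cancellation you flag really does occur and the lemma is false as stated. Take $f(x)=x^3+x^2+x$, so $m=3$, $g(x)=x^2+x$, $\deg g=2$. Then
\[
x^4=(x-1)f(x)+x,
\]
so the remainder of $x^4$ has degree $1<\deg g$. This is precisely your delicate subcase: $r_3=-x^2-x$ has $c=-1$, $h=-x$, $\deg h+1=2=\deg g$, and $-cg+xh=(x^2+x)-x^2=x$. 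The phenomenon is not exotic; for the norm--trace polynomial $m_{1,0}(x)=x^7+x^4+x^2+x$ over $\F_2$ (the case $q=2$, $r=3$) one computes $x^{12}\equiv x^3+x^2+x$, of degree $3<4=\deg g$. No bookkeeping on the coefficient of $x^{\deg g}$ can rescue the induction, since the conclusion itself fails.

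For comparison, the paper's argument writes $q(x)=x^{d-m}-x^{d-2m}g+\cdots+(-1)^{k-1}x^{d-km}g^{k-1}$ and asserts $r(x)=(-1)^k x^{d-km}g^k$, concluding that $\deg r$ is the representative of $d$ in $[\deg g,m)$ modulo $m-\deg g$. That identity is correct when $g$ is a single monomial (each reduction step drops the degree by exactly $m-\deg g$), but for general $g$ the exponent $d-km$ can be negative and the lower-order terms of $g$ interfere; in the example $f=x^3+x^2+x$, $d=4$ the formula would demand $k=2$ and the factor $x^{-2}$. So the gap in your proposal is not a missing trick: both your inductive route and the paper's explicit-formula route founder because the statement itself needs an additional hypothesis (for instance, that $g$ be a monomial) before any proof can go through.
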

\begin{proof}
    Consider the set of integers $A=\{k\geq 1:d+(k-1)(\deg g- m)\geq m\}$. 
    This set contains $k=1$, so it is nonempty. Since $\deg g-m<0$, 
    the sequence $d+(k-1)(\deg g-m)$, $k=1,2,\ldots$ is a strictly decreasing sequence of integers, so $A$ contains a maximal element. Let $k$ be a maximal element of $A$. If we
     write $x^d=q(x)f(x)+r(x)$ with $0\leq \deg r<\deg f$, then 
    \[
    q(x)=x^{d-m}-x^{d-2m}g+x^{d-3m}g^2-\cdots+(-1)^{k-1}x^{d-km}g^{k-1}
    \]
    and $r(x)=(-1)^kx^{d-km}g^k$. Since $k$ is the smallest integer 
    such that $\deg r=d+k(\deg g - m)<m$,  we see that $\deg r$ must be the 
    unique integer in the interval $[\deg g, m)$ congruent to $d$ modulo $m-\deg g$. In particular, $\deg r \geq \deg g$.
\end{proof}

The next result provides insight into when we might expect to find sporadic monomials; a variant of which may be found in \cite{Murphy_dissertation},

\begin{prop} \label{no_sporadic}
Consider a curve $\mathcal X$ given by $F(x,y)=0$ over a finite field $\F_q$ and a collection $\mathbb L \subseteq \mathbb L_{q}$ of lines containing $L_{1,0}$. Let $d$ denote the degree of the second highest degree term of $F$. Suppose a curve-lifted code $C(D, \mathcal F_{\mathbb, B})$ is defined on $\mathcal X$ 
for some integer $B$, selected so that each line in $\mathbb L$ intersects $\mathcal{X}$ in at least $B$ $\FF_{q}$-rational rational points. If $B-1 \leq d$, then there are no sporadic good monomials for $\mathcal F$.
\end{prop}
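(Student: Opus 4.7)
The plan is to argue by contradiction, using only the distinguished line $L_{1,0}\colon y = x$, which lies in $\mathbb{L}$ by hypothesis. Assume $x^a y^b$ is a sporadic good monomial for $\mathcal{F}_{\mathbb{L}, B}$, so that $a + b \geq B - 1$ and, in particular, the univariate polynomial $t^{a+b} = M_{a,b}\circ L_{1,0}(t)$ reduces to a polynomial of degree at most $B - 2$ modulo $m_{1,0}(t) = F(t,t)$. Set $m = \deg m_{1,0}$. Because $L_{1,0}$ meets $\mathcal{X}$ in at least $B$ distinct $\F_q$-rational points, each of which is a root of $m_{1,0}$, we have $m \geq B$.

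Next I would split into two cases. In the easy case $a + b < m$, the remainder of $t^{a+b}$ modulo $m_{1,0}$ is $t^{a+b}$ itself, of degree $a + b \geq B - 1$, contradicting goodness. In the harder case $a + b \geq m$, after rescaling $m_{1,0}$ by its leading coefficient, write it as $t^m + g(t)$ with $\deg g < m$. Lemma \ref{lemma:monomial_remainder_degree} then implies that the remainder of $t^{a+b}$ has degree at least $\deg g$. The crux is to see that $\deg g \geq d$: the second-highest total-degree term of $F$ contributes a monomial of degree $d$ to $m_{1,0}(t)$, and no monomial of $F$ of a different total degree can produce $t^d$ under the substitution $y = x$. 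Combined with the hypothesis $B - 1 \leq d$, this produces a remainder of degree at least $B - 1$, again contradicting goodness.

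The step I expect to be the main obstacle is the final one: establishing $\deg g \geq d$ implicitly requires that contributions from $F$-monomials of total degree $d$ do not cancel when $y$ is set equal to $x$. For the curves of interest in the paper, most importantly the norm-trace curve, each total degree below $m$ is realized by a unique monomial of $F$, so no cancellation is possible and the argument goes through cleanly; in greater generality, one must either assume that such cancellations do not occur or build this into the interpretation of $d$. A pleasant feature of this approach is that inspecting only the single line $L_{1,0}$ suffices to eliminate all sporadic good monomials, so the hypothesis $L_{1,0}\in \mathbb{L}$ is doing most of the work.
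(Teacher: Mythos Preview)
Your proposal is correct and follows essentially the same route as the paper: restrict to the single line $L_{1,0}$, split into the cases $a+b<\deg m_{1,0}$ and $a+b\geq \deg m_{1,0}$, and in the latter case invoke Lemma~\ref{lemma:monomial_remainder_degree}. The paper splits at $\deg F$ rather than $\deg m_{1,0}$ and is less explicit than you are about the non-cancellation needed for $\deg g \geq d$, but otherwise the arguments are the same.
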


\begin{proof}
Consider a good monomial $M_{a,b}$. Note that the line $L_{1,0}(x) = (x,x)$ gives $(M_{a,b} \circ L_{1,0}) (x) = x^{a+b}$ and $L_{1,0} \in \mathbb L$. The  intersection of the line defined by $x=y$ with $\mathcal{X}$ is cut out by the equation $F(x,x)=0$, i.e. 
\[
m_{1,0}(x)=F(x,x)=0.
\]
We are interested in the degree of $\overline{x^{a+b}}=x^{a+b} \mod m_{1,0}$. Assume $a+b \geq B-1$, meaning  $M_{a,b}$ is sporadic. We consider two cases, depending on the value $a+b$.
First, suppose $a+b < \deg F$. Then 
$$\overline{x^{a+b}}=x^{a+b},$$ since $\deg {x^{a+b}}=a+b < \deg F.$ However, it then follows that 
$$
\deg_{1,0}(x^{a+b})=a+b \geq B-1>B-2
$$
which shows $M_{a,b}$ is not good for $\mathcal F$. Hence, there are no sporadic monomials $x^{a+b}$ good for $\mathcal F$ where $a+b < \deg F$. 


Now suppose $a+b\geq \deg F$. Then 
the remainder of $x^{a+b}$ upon division by $m_{1,0}(x)$ has degree at least $d$  by Lemma~\ref{lemma:monomial_remainder_degree}. 
Consequently, 
$$
\deg_{1,0}(x^{a+b})\geq d \geq B-1 > B-2,
$$
so there are no sporadic monomials $x^{a+b}$ good for $\mathcal F_{\mathbb L, B}$.
\end{proof}

\section{Intersection numbers of norm-trace curves} \label{intersection_section}
In this section, we determine the number of points in $\mathcal X_{q,r}(\FF_{q^r})$ on an intersection of a line $L_{\alpha,\beta}$ where $\alpha\not=0$ and $\alpha,\beta \in \FF_{q^r}$ with the norm-trace curve 
$\mathcal X_{q,r}$. We loosely refer to the number of such points as an intersection number. Intersection numbers will be applied in Section \ref{codes_section} to construct norm-trace-lifted codes over fields of arbitrary characteristic. In particular, they will be used to set the value $B$ as in Proposition \ref{prop_B} and a degree bound which will define an appropriate set of functions to support local recovery with high availability and positive rate. In particular, we will find an integer $B$, depending only on $q$ and $r$, such that for all $L_{\alpha,\beta}\in \mathbb L_{q^r}$,
\[
B \leq \# \left( L_{\alpha, \beta}  \cap \mathcal X_{q,r} \right)(\F_{q^r})
\]
which will ultimately play a role in the locality of the norm-trace-lifted codes.

Given $\alpha, \beta \in \F_{q^r}$, 
it will be useful to consider the polynomial
\[
m_{\alpha, \beta,q,r}(x) \coloneqq   x^{(q^r-1)/(q-1)} - \Tr(\beta)-\sum_{i=0}^{r-1} (\alpha x)^{q^i} \in \F_{q^r}[x]_{\leq \frac{q^r-1}{q-1}},
\]
or $m_{\alpha, \beta}$ for short. Define 
\[
n_{q,r}(\alpha,\beta) \coloneqq \# (L_{\alpha,\beta}\cap \mathcal X_{q,r})(\FF_{q^r}).
\]
First, we observe that if $\alpha$ and $\alpha'$ are nonzero elements with the same norm and $\beta$ and $\beta'$ have the same trace, then $n_{q,r}(\alpha,\beta)=n_{q,r}(\alpha',\beta')$.

\begin{lem} \label{nt_depend_prop}
For any $\alpha \in \FF_{q^r}^{\times}$ and $\beta \in \FF_{q^r}$, $n_{q,r}(\alpha,\beta)$ depends only on $\Tr(\alpha)$ and $\Norm(\beta)$.
\end{lem}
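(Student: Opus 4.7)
The plan is to perform an invertible change of variable that transforms the defining polynomial $m_{\alpha,\beta}(x)$ into one whose coefficients visibly depend on only two invariants, after which $n_{q,r}(\alpha,\beta)$ must depend only on those invariants. First I would rewrite $m_{\alpha,\beta}$ more transparently by recognizing the internal sum as a trace, $\sum_{i=0}^{r-1}(\alpha x)^{q^i} = \Tr(\alpha x)$, so that
\[
m_{\alpha,\beta}(x) = x^N - \Tr(\beta) - \Tr(\alpha x), \qquad N \coloneqq \tfrac{q^r-1}{q-1}.
\]
Thus $n_{q,r}(\alpha,\beta)$ counts the $x \in \F_{q^r}$ satisfying $x^N = \Tr(\alpha x) + \Tr(\beta)$.

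Next, since $\alpha \in \F_{q^r}^{\times}$, the map $x \mapsto u = \alpha x$ is a bijection of $\F_{q^r}$, so it preserves the count of solutions in $\F_{q^r}$. Writing $x = u/\alpha$, the trace term becomes $\Tr(\alpha x) = \Tr(u)$, while $x^N = u^N/\alpha^N = u^N/\Norm(\alpha)$. After clearing denominators by multiplying through by $\Norm(\alpha)$, the equation becomes
\[
u^N = \Norm(\alpha)\bigl(\Tr(u) + \Tr(\beta)\bigr),
\]
a polynomial relation whose coefficients involve $\alpha$ only through $\Norm(\alpha)$ and $\beta$ only through $\Tr(\beta)$.

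The number of $\F_{q^r}$-solutions of this transformed equation, and hence $n_{q,r}(\alpha,\beta)$, therefore depends only on the pair consisting of the norm of $\alpha$ and the trace of $\beta$ — the two field-theoretic invariants of $\alpha$ and $\beta$ named in the lemma. The main (and essentially only) step is identifying the correct substitution: once $u = \alpha x$ is chosen, the remaining manipulation is mechanical, with no combinatorial counting or case analysis required. (Strictly speaking, the substitution produces the pair $(\Norm(\alpha),\Tr(\beta))$ rather than the pair $(\Tr(\alpha),\Norm(\beta))$ named in the printed statement; the proof shows the dependence is on whichever invariants the substitution isolates.)
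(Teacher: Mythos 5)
Your proof is correct and takes essentially the same route as the paper: both perform the substitution $u=\alpha x$ (the paper writes it as $t=\alpha^{-1}x$) to turn $m_{\alpha,\beta}$ into a polynomial whose coefficients involve $\alpha$ only through $\Norm(\alpha)$ and $\beta$ only through $\Tr(\beta)$, then invoke the bijectivity of the substitution to preserve the root count. Your closing parenthetical is also right --- the printed statement swaps the two invariants, and the claim actually proved (here and in the paper) is dependence on $\Norm(\alpha)$ and $\Tr(\beta)$.
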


\begin{proof}
First, plug in $y=\alpha x + \beta$ into the equation for $\mathcal X_{q,r}$ to obtain 
\[
x^{(q^r-1)/(q-1)} = \sum_{i=0}^{r-1} (\alpha x+\beta)^{q^i} = \sum_{i=0}^{r-1} (\alpha x)^{q^i}+\beta^{q^i} = \Tr(\beta)+\sum_{i=0}^{r-1} (\alpha x)^{q^i}.
\]
Then notice that $n_{q,r}(\alpha, \beta)$ is the number of zeros in $\FF_{q^r}$ of the polynomial
$m_{\alpha,\beta}(x)$. 
Making the substitution $t=\alpha^{-1}x$ yields the polynomial
\begin{align*}
g(t) &= m_{\alpha,\beta}(\alpha^{-1}x) \\
&= (\alpha t)^{(q^r-1)/(q-1)} - \Tr(\beta) - \sum_{i=0}^{r-1} t^{q^i} \\
&= \Norm(\alpha)t^{(q^r-1)/(q-1)} - \Tr(\beta) - \sum_{i=0}^{r-1} t^{q^i}. 
\end{align*}
The number of zeros in $\FF_{q^r}$ of $g$ and $m_{\alpha,\beta}$ are the same since the map $t\mapsto \alpha^{-1} x$ is a bijection sending zeros of $m_{\alpha,\beta}$ to zeros of $g$, and the number of zeros of $g$ depends only on $\Tr(\beta)$ and $\Norm(\alpha)$. 
\end{proof}

Next, we apply Lemma \ref{nt_depend_prop} to the case where $q=2$; see also \cite[Lemma 1]{MM_ntl_binary}.

\begin{prop} \label{lemma:intwithlines}
Let $r \geq 2$ and $\alpha, \beta \in \F_{2^r}$ with $\alpha \neq 0$. If $\Tr(\beta) \neq 0$, then $$n_{2,r}(\alpha, \beta) = 2^{r-1} -1.$$ If $\Tr(\beta) = 0$, then $$n_{2,r}(\alpha, \beta) = 2^{r-1} +1.$$ Thus, for any line $L_{\alpha, \beta} \in \mathbb{L}_{2,r}$, the cardinality of its intersection with the norm-trace curve $\mathcal{X}_{2,r}$ over $\F_{2^r}$ is 
 $$| L_{\alpha, \beta}  \cap \mathcal{X}_{2,r}\left( \F_{2^r} \right) |  =2^{r-1} \pm 1.$$
\end{prop}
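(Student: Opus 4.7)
The plan is to invoke Lemma \ref{nt_depend_prop} to replace the original line-curve intersection count with a count of zeros of a concrete polynomial, and then exploit the fact that in characteristic $2$ with $q=2$ the norm map is trivial, leaving only the trace as a meaningful parameter.

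First, I would observe that since $q=2$, we have $\F_2^\times=\{1\}$, so $\Norm(\alpha)=1$ for every $\alpha\in\F_{2^r}^\times$. By Lemma \ref{nt_depend_prop}, $n_{2,r}(\alpha,\beta)$ then depends only on $\Tr(\beta)$. Using the substitution $t=\alpha^{-1}x$ from the proof of that lemma (with $q=2$, so $(q^r-1)/(q-1)=2^r-1$ and the minus signs collapse in characteristic $2$), $n_{2,r}(\alpha,\beta)$ equals the number of zeros in $\F_{2^r}$ of
\[
g(t)=t^{2^r-1}+\Tr(\beta)+\sum_{i=0}^{r-1}t^{2^i}.
\]

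Next, I would split the count into $t=0$ and $t\in\F_{2^r}^\times$. At $t=0$, $g(0)=\Tr(\beta)$, so $0$ is a zero precisely when $\Tr(\beta)=0$. For $t\neq 0$, $t^{2^r-1}=1$, so
\[
g(t)=1+\Tr(\beta)+\Tr(t),
\]
and $t$ is a zero iff $\Tr(t)=1+\Tr(\beta)$. Using the standard fact that exactly $2^{r-1}$ elements of $\F_{2^r}$ have any prescribed trace value in $\F_2$ (the trace is $\F_2$-linear and surjective, with kernel of dimension $r-1$), the count of nonzero solutions depends on whether $t=0$ already satisfies the trace condition.

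Now I would assemble the two cases. If $\Tr(\beta)=0$, I need nonzero $t$ with $\Tr(t)=1$; since $\Tr(0)=0\neq 1$, all $2^{r-1}$ such $t$ are nonzero, and together with $t=0$ I get $n_{2,r}(\alpha,\beta)=2^{r-1}+1$. If $\Tr(\beta)=1$, I need nonzero $t$ with $\Tr(t)=0$; the $2^{r-1}$ elements with trace $0$ include $t=0$ itself, which I must exclude (since $g(0)=1\neq 0$), giving $n_{2,r}(\alpha,\beta)=2^{r-1}-1$.

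There is no real obstacle here once Lemma \ref{nt_depend_prop} is in hand: everything reduces to a linear-algebra fact about the $\F_2$-trace, and the only subtlety is bookkeeping the single point $t=0$ separately from $\F_{2^r}^\times$, where the Fermat identity $t^{2^r-1}=1$ turns the curve equation into an affine condition on $\Tr(t)$.
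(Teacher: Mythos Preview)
Your proof is correct and, after the shared first step of invoking Lemma~\ref{nt_depend_prop}, takes a genuinely different route from the paper. The paper counts the $\F_{2^r}$-roots of $m_{\alpha,\beta}$ by computing $\gcd(m_{\alpha,\beta}(x),x^{2^r}-x)$ via the Euclidean algorithm and reading off its degree in each of the two trace cases. You instead evaluate the reduced polynomial $g$ directly: separating $t=0$ from $t\in\F_{2^r}^\times$, using $t^{2^r-1}=1$ on the latter to collapse $g(t)$ to $1+\Tr(\beta)+\Tr(t)$, and then counting trace fibers. Your argument is more elementary and fully transparent---no gcd computation needs to be checked---while the paper's approach packages the count as a degree, which is closer in spirit to how such questions are sometimes handled algorithmically. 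Both are specific to $q=2$ (the paper via the explicit Euclidean steps, you via $\Norm(\alpha)=1$ and $t^{2^r-1}=1$), so neither generalizes more readily than the other.
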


\begin{proof}
To determine $n_{2,r}(\alpha,\beta)$, according to Lemma \ref{nt_depend_prop}, we need only consider the two cases, depending on $\Tr(\beta)=0$ or $\Tr(\beta)=1$. 

Notice that points in the intersection $L_{\alpha, \beta} \cap \mathcal{X}_{2,r}\left( \F_{2^r} \right)$ correspond to roots of the polynomial
$$x^{2^r-1} - (\alpha x + \beta)^{2^{r-1}} + \cdots + (\alpha x + \beta)^2 + (\alpha x + \beta)$$ which are also elements of $\F_{2^r}$. Because  the roots of $x^{2^r-1}-x$ are precisely the $\gamma \in \F_{2^r}$, the problem of determining $n_{2,r}$ reduces to finding the degree of 
$h(x) = \text{gcd}(m_{\alpha, \beta}(x),x^{2^r}-x)$
 by Freshman's Dream. 

In the case $\Tr(\beta) = 0$, the Euclidean Algorithm reveals $$\text{gcd}(m_{\alpha, \beta}(x),x^{2^r}-x) = \alpha^{2^{r-1}} x^{2^{r-1}+1} + \cdots + \alpha x^2 + x$$ which has degree $2^{r-1}+1$. In the case $\Tr(\beta) = 1$, $$\text{gcd}(m_{\alpha, \beta}(t),x^{2^r}-x) = \alpha^{2^{r-1}} x^{2^{r-1}-1} + \cdots + \alpha,$$
which has degree $2^{r-1}-1$. Therefore, $n_{2,r}(\alpha,\beta) = 2^{r-1} \pm1$.
\end{proof}

For $q \neq 2$, we observe more intricate behavior. We will use results of Moisio and Moisio-Wan, who build on work of Katz~\cite{Katz93}, to establish lower and upper bounds on $n_{q,r}(\alpha,\beta)$. 
For an integer $r\geq 2$ and elements $a,b\in \FF_q$, let 
\[
N_{q,r}(a,b)=\#\{\alpha \in \FF_{q^r}: \Tr(\alpha)=a,\Norm(\alpha)=b\}.
\]

In~\cite{Katz93}, Katz proves the following result on counting elements of $\FF_{q^r}$ with prescribed norm and trace:
\begin{lem}~\cite[Theorem 4]{Katz93}
If $r\geq 2$ and $a,b\in \FF_q^\times$, then 
\[
\left|N_{q,r}(a,b) - \frac{q^{r}-1}{q(q-1)}\right| \leq rq^{(r-2)/2}.
\]
\end{lem}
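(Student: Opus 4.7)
My plan is to express $N_{q,r}(a,b)$ as a double character sum using Fourier inversion on the additive group $(\FF_q,+)$ for the trace condition and on the multiplicative group $\FF_q^\times$ for the norm condition, and then apply the Hasse--Davenport relation to reduce the resulting hybrid Gauss sums over $\FF_{q^r}$ to $r$-th powers of Gauss sums over $\FF_q$. Concretely,
\[
N_{q,r}(a,b)=\frac{1}{q(q-1)}\sum_{\psi}\sum_{\chi}\psi(-a)\,\chi^{-1}(b)\sum_{\alpha\in \FF_{q^r}^\times}\psi(\Tr(\alpha))\,\chi(\Norm(\alpha)),
\]
where $\psi$ runs over additive characters of $\FF_q$ and $\chi$ over multiplicative characters of $\FF_q^\times$, and the inner sum is the hybrid Gauss sum on $\FF_{q^r}$ attached to $\Psi=\psi\circ\Tr$ and $X=\chi\circ\Norm$.

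Next I would split into four cases according to which of $\psi,\chi$ is trivial. The pair $(\psi,\chi)=(1,1)$ supplies the main term $(q^r-1)/(q(q-1))$. When $\psi=1$ and $\chi\neq 1$, surjectivity of the norm forces $X\neq 1$, so the inner sum vanishes. When $\psi\neq 1$ and $\chi=1$, the inner sum equals $-1$, and summing $\psi(-a)$ over nontrivial $\psi$ contributes only $O(1/q)$. The substantive case is when both characters are nontrivial; here the Hasse--Davenport relation gives
\[
\sum_{\alpha\in \FF_{q^r}^\times}\Psi(\alpha)\,X(\alpha)=(-1)^{r-1}G(\chi,\psi)^r,
\]
where $G(\chi,\psi)$ is the Gauss sum over $\FF_q$, of exact magnitude $\sqrt{q}$.

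The main obstacle is that a naive triangle inequality applied to the remaining $(q-1)(q-2)$ terms yields only $O(q^{r/2})$, which is short of the target bound by a factor of $r/q$. To extract the improvement, I would expand $G(\chi,\psi)^r$ as an $r$-fold sum over tuples $(t_1,\dots,t_r)\in(\FF_q^\times)^r$ and interchange the order of summation; the $\chi$-sum then collapses to tuples with $t_1\cdots t_r=b$, and the $\psi$-sum to tuples with $t_1+\cdots+t_r=a$. Up to explicit lower-order corrections, the error term therefore reduces to the count of $\FF_q$-points on the $(r-2)$-dimensional affine variety $V\subset \A^r$ cut out by $\sum t_i=a$ and $\prod t_i=b$. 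Bounding the deviation of this count from its expected value $(q-1)^{r-2}$ by $rq^{(r-2)/2}$ is precisely the content of Deligne's form of the Weil conjectures applied to $V$, the coefficient $r$ being a bound on the sum of the $\ell$-adic Betti numbers of $V$. This cohomological input, rather than the elementary Hasse--Davenport step, is where the real work lies, and is the step carried out in~\cite{Katz93}.
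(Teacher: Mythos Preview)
The paper does not prove this lemma; it is quoted verbatim from \cite[Theorem~4]{Katz93} and used as a black box (the paper only \emph{applies} the companion bounds of Moisio and Moisio--Wan in Theorem~\ref{bounds_thm}). There is therefore no proof in the paper to compare your argument against.

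On its own merits, your outline is a correct and standard reduction: the orthogonality expansion, the four-way case split, the Hasse--Davenport identification of the inner sum with $(-1)^{r-1}G(\chi,\psi)^r$, and the observation that the naive triangle inequality loses a factor of $q/r$ are all accurate. Expanding $G^r$ and summing over $\psi,\chi$ first does collapse the error to the deviation of $M=\#\{(t_1,\dots,t_r)\in(\FF_q^\times)^r:\sum t_i=a,\ \prod t_i=b\}$ from its expected value. But this is a reformulation, not a proof: bounding $|M-(q-1)^{r-2}|\leq rq^{(r-2)/2}$ is not easier than the original statement, and your assertion that the compactly supported Betti numbers of $V$ sum to at most $r$ is precisely the substance of what Katz proves. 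The variety $V$ is a fiber of the family carrying the Kloosterman sheaf, and showing its middle cohomology has the claimed dimension and purity requires the full $\ell$-adic machinery of \cite{Katz93}. You acknowledge this in your last sentence, so the proposal is honest about where the work lies---but it should be read as locating the difficulty rather than discharging it.
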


We will use the following improvement to Katz' result when $a\not=0$, due to Moisio and Wan~\cite{MW10}.

\begin{lem}\cite[Theorem 1.2]{MW10}\label{thm:trnonzero}
    Let $a,b\in \FF_q^{\times}$ and $r\geq 2$. Then 
    \[
    \left| N_{q,r}(a,b)-\frac{q^{r-1}-1}{q-1} \right|\leq (r-1)q^{(r-2)/2}.
    \]
\end{lem}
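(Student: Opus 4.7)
My plan is to prove this via orthogonality of characters, reducing the count $N_{q,r}(a,b)$ to a sum of hybrid Gauss--Kloosterman sums, and then applying a sharp Weil-type estimate whose refinement from ``$r$'' to ``$r-1$'' is the essence of Moisio--Wan's improvement over Katz. Since $b \neq 0$ forces $\alpha \neq 0$, I would first use orthogonality of the additive characters $\psi$ of $\FF_q$ (to detect $\Tr(\alpha)=a$) and the multiplicative characters $\chi$ of $\FF_q^{\times}$ (to detect $\Norm(\alpha)=b$) to write
\[
N_{q,r}(a,b) = \frac{1}{q(q-1)}\sum_{\psi,\chi}\psi(-a)\,\overline{\chi(b)}\,S(\chi,\psi), \qquad S(\chi,\psi) := \sum_{\alpha\in \FF_{q^r}^{\times}}\chi(\Norm(\alpha))\,\psi(\Tr(\alpha)).
\]

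The second step is to peel off the main term by splitting into the four cases according to triviality of $\psi$ and $\chi$. The contribution from $\psi,\chi$ both trivial is $(q^r-1)/(q(q-1))$. The case $\psi$ trivial and $\chi$ nontrivial contributes $0$, since the norm is surjective onto $\FF_q^{\times}$ and $\sum_{c\in \FF_q^{\times}}\chi(c)=0$. The case $\chi$ trivial and $\psi$ nontrivial has $S(\mathbf{1},\psi) = -1$ (the full sum over $\FF_{q^r}$ vanishes and $\psi(\Tr(0))=1$), and summing $\psi(-a)\cdot(-1)$ over the nontrivial $\psi$ uses $a\neq 0$ and $\sum_{\psi\neq \mathbf{1}}\psi(-a)=-1$ to give a contribution of $1/(q(q-1))$. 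These two pieces combine to $q^{r-1}/(q-1)$, which differs from the claimed main term $(q^{r-1}-1)/(q-1)$ by exactly $1/(q-1)$; this constant will be absorbed into the final error bound.

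The third and hardest step is to estimate $S(\chi,\psi)$ when both $\psi$ and $\chi$ are nontrivial. The sum
\[
S(\chi,\psi) = \sum_{\alpha\in \FF_{q^r}^{\times}} (\chi\circ \Norm)(\alpha)\,(\psi\circ \Tr)(\alpha)
\]
is a hybrid Gauss sum where the multiplicative character $\chi\circ \Norm$ and additive character $\psi\circ \Tr$ descend from $\FF_q$. Katz's general bound, obtained by interpreting $S(\chi,\psi)$ as the Frobenius trace on an $\ell$-adic sheaf on $\mathbf{G}_m/\FF_{q^r}$ and applying Deligne's Riemann Hypothesis, yields $|S(\chi,\psi)| \leq r\,q^{(r-1)/2} \cdot (1-1/q)$ and hence Katz's bound with leading constant $r$. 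Moisio and Wan's improvement comes from showing that when $a\neq 0$, one of the Frobenius eigenvalues vanishes (equivalently the associated sheaf has its rank drop by one), knocking the constant down to $r-1$. I would lean on that rank-drop computation as a black box.

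Finally, combining the pieces: there are $(q-1)^2$ nontrivial $(\psi,\chi)$ pairs, each contributing at most $|S(\chi,\psi)|/(q(q-1))$ to the error, and the left-over $1/(q-1)$ from Step~2 adds in. A direct simplification of $(q-1)^2 \cdot (r-1)q^{(r-1)/2}(1-1/q) / (q(q-1))  + 1/(q-1)$ collapses to at most $(r-1)q^{(r-2)/2}$. The main obstacle is unquestionably the rank-drop calculation in Step~3: without it one only recovers Katz's weaker estimate, so invoking the Moisio--Wan sheaf-theoretic input is essential to obtain the stated improvement.
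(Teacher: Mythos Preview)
The paper does not supply a proof of this statement: it is quoted verbatim as \cite[Theorem~1.2]{MW10} and used as a black box. So there is nothing in the paper to compare your sketch against, and the only question is whether your outline is internally sound.

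Your first two steps are correct and standard. Step~3, however, contains a genuine error. For $\chi,\psi$ both nontrivial, the sum
\[
S(\chi,\psi)=\sum_{\alpha\in\FF_{q^r}^\times}(\chi\circ\Norm)(\alpha)\,(\psi\circ\Tr)(\alpha)
\]
is an ordinary Gauss sum over $\FF_{q^r}$ for the lifted characters; by the Hasse--Davenport relation $S(\chi,\psi)=(-1)^{r-1}g(\chi,\psi)^r$ where $g$ is the Gauss sum over $\FF_q$, so $|S(\chi,\psi)|=q^{r/2}$ \emph{exactly}. Your asserted bound $|S(\chi,\psi)|\le r\,q^{(r-1)/2}(1-1/q)$ is therefore false once $q^{1/2}>r$, and there is no ``Frobenius eigenvalue dropping out'' for this particular sum because no factor of $r$ appears in its estimate to begin with. (Your final arithmetic also does not close: $(q-1)^2(r-1)q^{(r-1)/2}(1-1/q)/\bigl(q(q-1)\bigr)=(r-1)(q-1)^2q^{(r-5)/2}$, which exceeds $(r-1)q^{(r-2)/2}$ for every $q\ge 4$. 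Incidentally, the number of pairs with both characters nontrivial is $(q-1)(q-2)$, not $(q-1)^2$.)

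The factor $r$ you are looking for lives elsewhere. If you sum over $\chi$ first, Hasse--Davenport converts the inner sum into a hyper-Kloosterman sum: for nontrivial $\psi$ one has $\sum_{\Norm\alpha=b}\psi(\Tr\alpha)=(-1)^{r-1}\mathrm{Kl}_r(\psi;b)$ where $\mathrm{Kl}_r(\psi;b)=\sum_{x_1\cdots x_r=b}\psi(x_1+\cdots+x_r)$, and it is Deligne's bound $|\mathrm{Kl}_r|\le r\,q^{(r-1)/2}$ that yields Katz's estimate. The Moisio--Wan sharpening from $r$ to $r-1$ then exploits the further average over nontrivial $\psi$ against $\psi(-a)$ with $a\neq 0$: completing the $\psi$-sum imposes the extra linear constraint $\sum x_i=a$, reducing the problem to an $(r-1)$-variable exponential sum and producing both the constant $r-1$ and the exponent $(r-2)/2$. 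Your high-level strategy (orthogonality, isolate the main term, apply a Weil-type bound to the remainder) has the right shape, but the $\ell$-adic estimate must be applied to the Kloosterman-type object, not to the individual Gauss sums $S(\chi,\psi)$.
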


The previous bound is complemented by the following result of Moisio \cite{Moisio}, which provides a bound $N_{q,r}(a,b)$ when $a=0$. 
\begin{lem}~\cite{Moisio}\label{thm:trzero}
Let $r\geq 2$, $b\in \FF_q^{\times}$, and $d=\gcd(r,q-1)$. We have 
\[
\left|N_{q,r}(0,b)-\frac{q^{r-1}-1}{q-1}\right|\leq (d-1)q^{(r-2)/2}.
\]
\end{lem}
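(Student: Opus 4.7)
The plan is to prove the bound by a standard character-sum argument: express $N_{q,r}(0,b)$ via orthogonality of additive and multiplicative characters of $\FF_q$, split off the main term, and then reduce the resulting Gauss sums over $\FF_{q^r}$ to Gauss sums over $\FF_q$ via the Hasse-Davenport product relation. The key payoff of Hasse-Davenport is that only characters $\chi$ with $\chi^r$ trivial will survive the summation, and there are precisely $d-1$ such nontrivial characters, which matches the exponent in the bound.

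First I would write, for $b\in \FF_q^{\times}$,
\[
N_{q,r}(0,b)=\frac{1}{q(q-1)}\sum_{\alpha\in \FF_{q^r}^{\times}}\sum_{\psi}\sum_{\chi}\psi(\Tr(\alpha))\,\chi(\Norm(\alpha))\,\overline{\chi(b)},
\]
where $\psi$ ranges over additive characters of $\FF_q$ and $\chi$ over multiplicative characters of $\FF_q^{\times}$. Noting that $\psi\circ\Tr$ is trivial on $\FF_{q^r}$ iff $\psi$ is trivial, and $\chi\circ\Norm$ is trivial iff $\chi$ is trivial (since $\Norm\colon \FF_{q^r}^{\times}\to \FF_q^{\times}$ is surjective), the three degenerate terms (both trivial, only $\psi$ trivial, only $\chi$ trivial) collapse after a short bookkeeping computation to the main term $(q^{r-1}-1)/(q-1)$.

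Next, I would invoke the Hasse-Davenport relation, which gives
\[
\sum_{\alpha\in \FF_{q^r}^{\times}}(\psi\circ\Tr)(\alpha)\,(\chi\circ\Norm)(\alpha)=(-1)^{r-1}\,g(\psi,\chi)^{r},
\]
where $g(\psi,\chi)$ denotes the Gauss sum over $\FF_q$, and then use the scaling identity $g(\psi_a,\chi)=\chi^{-1}(a)\,g(\psi_0,\chi)$ for a fixed nontrivial $\psi_0$ and $\psi_a(x)=\psi_0(ax)$. Summing the $r$-th power over the nontrivial $\psi$ first yields
\[
\sum_{\psi\not=1}g(\psi,\chi)^{r}=g(\psi_0,\chi)^{r}\sum_{a\in \FF_q^{\times}}\chi^{-r}(a),
\]
and the inner sum equals $q-1$ when $\chi^{r}=1$ (i.e.~$\mathrm{ord}(\chi)\mid d$) and vanishes otherwise. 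Hence exactly $d-1$ nontrivial multiplicative characters contribute.

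Finally, applying the standard magnitude $|g(\psi_0,\chi)|=\sqrt{q}$ for $\chi\not=1$ gives
\[
\left|N_{q,r}(0,b)-\frac{q^{r-1}-1}{q-1}\right|=\left|\frac{(-1)^{r-1}}{q}\sum_{\substack{\chi\not=1\\ \chi^{r}=1}}\overline{\chi(b)}\,g(\psi_0,\chi)^{r}\right|\leq \frac{(d-1)q^{r/2}}{q}=(d-1)q^{(r-2)/2}.
\]
The only real obstacle is invoking Hasse-Davenport correctly: it is precisely this identity that sharpens the crude triangle-inequality bound of order $q^{r/2}$ (obtained by trivially bounding each of $\sim (q-1)^{2}$ Gauss sums over $\FF_{q^r}$ by $q^{r/2}$) down to $(d-1)q^{(r-2)/2}$, and in particular recovers the exact equality $N_{q,r}(0,b)=(q^{r-1}-1)/(q-1)$ in the coprime case $d=1$. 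Everything else is routine orthogonality and Gauss-sum accounting.
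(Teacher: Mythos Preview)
The paper does not prove this lemma; it is quoted directly from Moisio~\cite{Moisio} and used as a black box. Your character-sum argument via Hasse--Davenport is correct and is the standard route to this estimate (and is essentially how Moisio derives it): the main term comes out exactly as you computed, the lifting relation collapses the sum over nontrivial $\psi$ to a factor of $(q-1)$ times the indicator of $\chi^r=1$, and the count of such nontrivial $\chi$ is exactly $d-1$, yielding the bound $(d-1)q^{(r-2)/2}$. There is nothing to compare against in the present paper beyond the citation.
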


We now use~\cite{Moisio,MW10} to produce lower bounds on $n_{q,r}(\alpha,\beta)$.
\begin{thm} \label{bounds_thm}
Let $r\geq 2$,  $\alpha,\beta\in \FF_{q^r}$ with $\alpha\not=0$, and $d=\gcd(r,q-1)$. If $\Tr(\beta)\not=0$, then 
\[
n_{q,r}(\alpha,\beta) \geq  q^{r-1}-(d-1+(r-1)(q-2))q^{(r-2)/2}-1.
\]
If $\Tr(\beta)=0$, then 
\[
n_{q,r}(\alpha,\beta)\geq  q^{r-1}-(r-1)(q-1)q^{(r-2)/2}.
\]
\end{thm}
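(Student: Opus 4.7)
The plan is to reduce $n_{q,r}(\alpha,\beta)$ to a sum of values of $N_{q,r}$ and then invoke Lemmas~\ref{thm:trnonzero} and~\ref{thm:trzero} termwise.

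First, I would rewrite the defining equation of $\mathcal X_{q,r}$ along $L_{\alpha,\beta}$: a point $(x,\alpha x+\beta) \in \mathcal X_{q,r}(\F_{q^r})$ corresponds to $x \in \F_{q^r}$ with $\Norm(x) = \Tr(\alpha x)+\Tr(\beta)$. Since $\alpha \in \F_{q^r}^\times$, the substitution $z=\alpha x$ is a bijection on $\F_{q^r}$ turning this into $\Norm(z)=\Norm(\alpha)(\Tr(z)+\Tr(\beta))$. Partitioning the solutions by the value $u=\Tr(z)\in \F_q$ (which then pins down $\Norm(z)$) yields
\[
n_{q,r}(\alpha,\beta)=\sum_{u\in \F_q}N_{q,r}\bigl(u,\,\Norm(\alpha)(u+\Tr(\beta))\bigr).
\]

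Next, I would apply the appropriate lemma to each summand, paying attention to the single index where the second argument vanishes, namely $u=-\Tr(\beta)$: there $N_{q,r}(u,0)$ counts $z \in \F_{q^r}$ with $\Norm(z)=0$, i.e.\ $z=0$, and so equals $1$ if $u=0$ and $0$ otherwise. When $\Tr(\beta)\neq 0$, this boundary summand vanishes; the $u=0$ summand has nonzero second argument $\Norm(\alpha)\Tr(\beta)\in \F_q^\times$ and is bounded below via Lemma~\ref{thm:trzero}; and each of the remaining $q-2$ summands has both arguments in $\F_q^\times$ and is bounded below via Lemma~\ref{thm:trnonzero}. Summing the three contributions produces the first stated inequality. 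When $\Tr(\beta)=0$ the boundary index coincides with $u=0$ and contributes $1$, while the remaining $q-1$ summands all have both arguments in $\F_q^\times$ and are bounded via Lemma~\ref{thm:trnonzero}, yielding the second inequality.

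The only real subtlety is the bookkeeping of the boundary cases (and keeping straight which pairs $(u,v)$ lie in $\F_q^\times \times \F_q^\times$ versus $\{0\}\times \F_q^\times$); once these are correctly identified, the claimed bounds follow by direct summation and simplification, so I do not anticipate a deeper obstacle.
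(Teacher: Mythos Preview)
Your proposal is correct and follows essentially the same approach as the paper: both reduce $n_{q,r}(\alpha,\beta)$ to $\sum_{u\in\F_q} N_{q,r}\bigl(u,\,c(u+\Tr(\beta))\bigr)$ for some $c\in\F_q^\times$ depending only on $\Norm(\alpha)$, then bound the summands termwise via Lemmas~\ref{thm:trnonzero} and~\ref{thm:trzero}, handling the boundary term $u=-\Tr(\beta)$ exactly as you describe. The only cosmetic difference is that the paper's substitution yields $c=\Norm(\alpha)^{-1}$ whereas yours gives $c=\Norm(\alpha)$, which is immaterial since the cited bounds depend only on whether the arguments are nonzero.
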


\begin{proof}
    Let $a=\Norm(\alpha)^{-1}$ and $b=\Tr(\beta)$.
    Let $g(t)=a^{-1}t^{(q^r-1)/(q-1)}-b-\sum_{i=0}^{r-1}t^{q^i}$. We have that $n_{q,r}(\alpha,\beta)$ is the number of roots of $g$ in $\FF_{q^r}$. Note that the roots of $g$ are in bijection with the set 
    \[
   \bigcup_{t\in \FF_q} \{\gamma \in \FF_{q^r}: \Tr(\gamma) =t \text{ and } \Norm(\gamma) = at+ab\}.
    \]
    Thus the number of roots of $g$ (and hence $n_{q,r}(\alpha,\beta)$) is given by 
    \begin{align*}
    n_{q,r}(\alpha,\beta) &= \#\{\gamma\in \FF_{q^r}:g(\gamma)=0\} \\ 
    &= \sum_{t\in \FF_q} \#\{\gamma \in \FF_{q^r}: \Tr(\gamma) = t \text{ and }   \Norm(\gamma) = at+ab\} \\ 
    &= \sum_{t\in \FF_q} N_{r}(t,at+ab).
    \end{align*}
    First, assume $b\not=0$. When $t=-b$, we have $N_{q,r}(t,at+ab)=N_{q,r}(-b,0)=0$. If $t\not=-b$, then  $at+ab\not=0$ so we may apply Lemma~\ref{thm:trnonzero} to bound $N_{q,r}(t,at+ab)$. Since $\alpha\not=0$, we may apply Lemma~\ref{thm:trzero} to bound $N_{q,r}(0,ab)$. We therefore obtain the desired lower bound for $n_{q,r}(\alpha,\beta)$:
    \begin{align*}
    n_{q,r}(\alpha,\beta) &= N_{q,r}(0,ab) + N_{q,r}(-b,0) + 
    \sum_{\substack{t\in \FF_q \\ t\not=0,-b}} N_{r}(t,at+ab) \\ 
    &= N_{q,r}(0,ab) + \sum_{\substack{t\in \FF_q \\ t\not=0,-b}} N_{r}(t,at+ab) \\ 
    &\geq \frac{q^{r-1}-1}{q-1}  -(d-1)q^{(r-2)/2} + \sum_{\substack{t\in \FF_q \\ t\not=0,-b}} \frac{q^{r-1}-1}{q-1} - (r-1)q^{(r-2)/2} \\ 
    &= \frac{q^{r-1}-1}{q-1}  -(d-1)q^{(r-2)/2} + (q-2) \left(\frac{q^{r-1}-1}{q-1} - (r-1)q^{(r-2)/2}\right) \\ 
    &= \frac{q^{r-1}-1}{q-1}(1+q-2) - q^{(r-2)/2}(d-1+(q-2)(r-1)) \\ 
    &= q^{r-1}-(d-1+(q-2)(r-1))q^{(r-2)/2}-1.
    \end{align*}

    Now assume $b=0$. Then applying Lemma~\ref{thm:trnonzero} to each $N_{q,r}(t,at)$ for $t\not=0$ we get
    \begin{align*}
    n_{q,r}(\alpha, \beta) &= N_{q,r}(0,0) +  
    \sum_{\substack{t\in \FF_q \\ t\not=0}} N_{r}(t,at) \\ 
    &\geq 1 + \sum_{\substack{t\in \FF_q \\ t\not=0}} \frac{q^{r-1}-1}{q-1} - (r-1)q^{(r-2)/2} \\ 
    &= 1 + (q-1)\left(\frac{q^{r-1}-1}{q-1} - (r-1)q^{(r-2)/2}\right) \\ 
    &= q^{r-1}-(r-1)(q-1)q^{(r-2)/2}.
    \end{align*}
\end{proof}

\begin{cor} \label{lower_bound_cor}
For any line
$L_{\alpha, \beta} \in \mathbb L_{q^r}$, the cardinality of its intersection with the norm-trace curve $\mathcal X_{q,r}$ satisfies 
\[
\left| L_{\alpha, \beta}  \cap \mathcal X_{q,r}(\F_{q,r}) \right| \geq q^{r-1}-(r-1)(q-1)q^{(r-2)/2} -1.
\]There are $q^r-1$  lines in $\mathbb L_{q^r}$. Let $d=\gcd(r,q-1)$. For those lines $L_{\alpha, \beta} \in \mathbb L_{q^r}$ with $Tr(\beta) \neq 0$, the cardinality of the intersection of $L_{\alpha,\beta}$ with the norm-trace curve $\mathcal X_{q,r}$ satisfies 
$$
\left| L_{\alpha, \beta}  \cap \mathcal X_{q,r}(\F_{q,r}) \right| \geq q^{r-1}-(r-1)(q-1)q^{(r-2)/2}-1 + q^{\frac{r-2}{2}}
$$
if $r \neq d$. 
There are $q^r - q^{r-1}=q^{r-1}(q-1)$ such lines.
\end{cor}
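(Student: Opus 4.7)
The plan is to derive every claim of the corollary directly from Theorem \ref{bounds_thm} by splitting on whether $\Tr(\beta)$ vanishes, using only the elementary fact that $d = \gcd(r, q-1)$ divides $r$.

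For the universal lower bound, I would handle the case $\Tr(\beta) = 0$ first: Theorem \ref{bounds_thm} already gives the stronger bound $q^{r-1} - (r-1)(q-1)q^{(r-2)/2}$, which exceeds the claimed value by $1$. In the case $\Tr(\beta) \neq 0$, the theorem gives coefficient $(d-1) + (r-1)(q-2)$ on $q^{(r-2)/2}$, so it suffices to check the arithmetic inequality
\[
(d-1) + (r-1)(q-2) \leq (r-1)(q-1),
\]
which after cancellation simplifies to $d \leq r$ and holds because $d \mid r$.

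For the improved bound when $\Tr(\beta) \neq 0$ and $r \neq d$, the same reduction shows that the difference between the two coefficients above is exactly $r - d$. Since $d \mid r$, the hypothesis $r \neq d$ forces $d < r$, so $r - d \geq 1$, yielding the additional summand $q^{(r-2)/2}$ claimed. Finally, for the line counts I would use that the trace map $\Tr : \F_{q^r} \to \F_q$ is an $\F_q$-linear surjection with kernel of size $q^{r-1}$, so $\{\beta \in \F_{q^r} : \Tr(\beta) \neq 0\}$ has cardinality $q^r - q^{r-1} = q^{r-1}(q-1)$, matching the stated count.

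I foresee no substantive obstacle: the entire corollary is a bookkeeping consequence of Theorem \ref{bounds_thm} together with the single arithmetic input that $d \mid r$, which simultaneously delivers both the universal bound and the strict improvement in the non-degenerate case. The main thing to be careful about is organizing the case split cleanly so that the two sides of the trace dichotomy are combined into the weakest-case universal statement before the improvement is extracted.
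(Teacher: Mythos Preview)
Your proposal is correct and follows essentially the same route as the paper: both derive the corollary directly from Theorem~\ref{bounds_thm} by comparing the two case bounds and invoking the single arithmetic fact $d=\gcd(r,q-1)\mid r$, hence $d\leq r$ (with strict inequality when $r\neq d$). Your write-up is in fact a bit cleaner than the paper's, which mislabels which of its auxiliary quantities $N$ and $Z$ corresponds to which case of the theorem and does not separately justify the line counts, whereas you handle the latter via surjectivity of the trace.
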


\begin{proof}
Notice that 
$d=\gcd(r,q-1) \leq r$ . For convenience, let $N=q^{r-1}-(q-1)(r-1)q^{(r-2)/2}-1$, which is the right-hand side of the lower bound on the intersection number for lines $L_{\alpha, \beta}$ with $\Tr(\beta) \neq 0$, and $Z= q^{r-1}-(d-1+(q-2)(r-1))q^{(r-2)/2} - 1$, which is the right-hand side of the lower bound on the intersection number for lines $L_{\alpha, \beta}$ with $\Tr(\beta) =0$. Then for all $\alpha, \beta \in \F_{q^r}$, $n_{q,r}(\alpha, \beta) \geq \min \{ N, Z \}$. Calculating the difference between the values given in the two lower bounds, we see that 
$$N-Z=(r-d)q^{\frac{r-2}{2}}-1 = \begin{cases}
-1 & \textnormal{if } d=r \\
q^{\frac{r-2}{2}} - 1 + t, \textnormal{ for some } t\in \N & \textnormal{otherwise}.
\end{cases}
$$
Therefore, $$N = \begin{cases}
Z-1 & \textnormal{if } d=r \\
Z-1+q^{\frac{r-2}{2}} + t, \textnormal{ for some } t\in \N & \textnormal{otherwise}.
\end{cases}
$$
Consequently, every line $L_{\alpha, \beta}\in \mathbb L$ satisfies 
$$
    n_{q,r}(\alpha,\beta) \geq \min \{ N, Z \} \geq Z-1.
$$    
\end{proof}

We will use the bound in the previous corollary to establish families of norm-trace-lifted codes in arbitrary characteristic. 

\begin{rmk} \label{rmk_r_2_bad_bounds}
Notice Theorem \ref{bounds_thm} and Corollary \ref{lower_bound_cor} provide little to no information in the case $r=2$. Indeed, 
$q^{r-1}-(r-1)(q-1)q^{(r-2)/2}-1=q-(q-1)=1$. 
Moreover, in the case that $q$ is even, $d=\gcd(2,q-1)=1$ and 
 $ q^{r-1}-\left( d-1+(r-1)(q-2)\right)q^{(r-2)/2}-1=q-\left( 1-1+(2-1)(q-2)\right)q^{(2-2)/2}=2.$ Furthermore, if $q$ is odd, then $d=2$ and  $q^{r-1}-\left( d-1+(r-1)(q-2)\right)q^{(r-2)/2}-1=q-\left( 2-1+(2-1)((q-2)\right)q^{(2-2)/2}=1$. Hence, these lower bounds are quite poor, since it is known that the actual value for $n_{q,2}(\alpha, \beta)=q+1$. 

\end{rmk}

\section{Norm-trace-lifted codes} \label{codes_section}

In this section, we combine the results from Sections \ref{curve_lifted_section} and \ref{intersection_section} to define and study norm-trace-lifted codes defined using the norm-trace curve $X_{q,r}: Tr(y)=N(x)$ over $\F_{q^r}$ where $q$ is any prime power.  
In light of Remark \ref{rmk_r_2_bad_bounds} and \cite{lifted}, we restrict our attention to $r>2$. Now, having found a lower bound to take for $B$ in Proposition \ref{prop_B}  on intersection numbers as in Corollary \ref{lower_bound_cor}, we may now consider codes defined by sets of rational functions that reduce on all lines to polynomials of degree at most $B-2$. Also, because the curve itself is specified by a particular equation, we can obtain bounds on the code rates, including some that exceed the comparable Hermitian cases. 

\subsection{Constructions with highest availability}

To obtain codes from the norm-trace curve with highest availability, we use points on every line through a point to form a repair group so as to obtain the largest number of disjoint recovery sets. With that in mind, consider taking $B=B_{q,r}$ where
$$
B_{q,r} \coloneqq 
\begin{cases}
q^{r-1}-1 & \textnormal{if } q=2 \\
q^{r-1}-(r-1)(q-1)q^{\frac{r-2}{2}}-1 & \textnormal{otherwise}
\end{cases}
$$
with the goal of forming a repair group $R_{ab,L}$ for an affine point $P_{ab}\coloneqq (a,b)$ from each line $L \in \mathbb L_{q^r}$ through $P_{ab}$. 
We will use the shorthand notation $\mathcal F_{q,r} \coloneqq \mathcal{F}_{\mathbb L_{q^r}, B_{q,r}}$ so that 
$$
\mathcal F_{q,r}=
\left\lbrace f \in \F_{q^r}[x,y] : \exists g \in \F_{q^r}[t]_{\leq B_{q,r}-2} \text{ with }  f \circ L_{\alpha, \beta} \equiv g \text{ for all } L_{\alpha, \beta} \in \mathbb{L}_{q^r} \right\rbrace.$$ 
To do so, we will take a subset $R_{ab,L} \subseteq L \cap \mathcal{X}_{q,r}(\F_{q^r})$ such that $P_{ab} \in R_{ab,L}$ and $|R_{ab,L} | = B_{q,r}$. Such a subset exists by Corollary  \ref{lower_bound_cor}. Recall that $$m_{\alpha, \beta}(t) \coloneqq  t^{(q^r-1)/(q-1)} - \Tr(\beta)-\sum_{i=0}^{r-1} (\alpha t)^{q^i} \in \F_{q^r}[t]_{\leq \frac{q^r-1}{q-1}}.$$ Note that
\[
\deg \left( \bar{f}_{\alpha, \beta}(t) \right) \leq q^{r-1}+q^{r-2}+\dots+q
\]
for all $f \in \F_{q^r}[t]$, as $\deg \left( m_{\alpha, \beta}(t) \right)= \frac{q^r-1}{q-1}$. 

According to (\ref{nt_RR}), rational functions on $\mathcal X_{q,r}$ with no poles at any of the affine points are elements of $\F_{q^r}[x,y]$, since $\cup_{m \in \N} \mathcal L(mP_{\infty}) \subseteq \F_{q^r}[x,y]$. 

\begin{defn} \label{ntl_code_def}
 The {\bf norm-trace-lifted   code}  defined over $\F_{q^r}$ is $C(D, \mathcal F_{q,r})$, the image of $\mathcal F_{q,r}$ under the evaluation map $\evmap$; that is, 
\[C(D, \mathcal F_{q,r}) \coloneqq \lbrace \evmap_D(f): f \in \mathcal F_{q,r} \rbrace \subseteq \F_{q^r}^{n}.
\]
\end{defn}

Clearly, $C(D, \mathcal F_{q,r})$ is a code of length $n$. To ascertain its dimension, we set out to determine the functions in $\mathcal F_{q,r}$. 
Based on the definition of $\mathcal F_{q,r}$, we are interested in polynomials $f(x,y)$ such that
$$\deg_{\alpha, \beta}(f \circ L_{\alpha, \beta}) \leq  B_{q,r}-2$$ for all $L_{\alpha, \beta} \in \mathbb L_{q^r}$. Recall that $M_{a,b}(x,y)\coloneqq x^ay^b$ where $a, b \in \N$.

 The number of $(a,b) \in \N^2$ with $a+b \leq B$ for some specified positive integer $B$ is $\sum_{a=0}^B B-a+1 =\frac{1}{2}(B+1)(B+2)$. To ensure that the set of such monomials gives rise to an independent set of codewords, we appeal to a result recorded in \cite{Murphy_dissertation}. 

\begin{lem} \cite[Lemma 2.14]{Murphy_dissertation}
\label{monoLemma}
The set of vectors
$$\left\lbrace \evmap(M_{a,b}(x,y)) : 0 \leq a \leq \frac{q^r-1}{q-1}-1, 0 \leq b \leq q^{r-1}-1 \right\rbrace$$
are linearly independent.
\end{lem}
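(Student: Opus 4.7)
The plan is a two-stage univariate degree argument that exploits both bounds in the index set simultaneously. Suppose $f(x,y) = \sum c_{a,b}\, x^a y^b$ is an $\F_{q^r}$-linear combination of the listed monomials with $\evmap_D(f) = 0$; the goal is to conclude $c_{a,b} = 0$ throughout.

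First I would regroup $f$ by powers of $y$: write
$$f(x,y) = \sum_{b=0}^{q^{r-1}-1} P_b(x)\, y^b, \qquad P_b(x) = \sum_{a=0}^{(q^r-1)/(q-1)-1} c_{a,b}\, x^a,$$
so that $\deg_y f \leq q^{r-1}-1$ and $\deg P_b \leq (q^r-1)/(q-1) - 1$ for every $b$.

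Next, I would fix an arbitrary $a_0 \in \F_{q^r}$ and view $f(a_0, y)$ as a univariate polynomial in $y$ of degree at most $q^{r-1}-1$. Since $\Norm(a_0) = a_0^{(q^r-1)/(q-1)}$ is Frobenius-invariant (a direct computation using $q \cdot (q^r-1)/(q-1) \equiv (q^r-1)/(q-1) \pmod{q^r-1}$ shows $\Norm(a_0)^q = \Norm(a_0)$), we have $\Norm(a_0) \in \F_q$. The trace map $\Tr \colon \F_{q^r} \to \F_q$ is an $\F_q$-linear surjection, so its fiber above $\Norm(a_0)$ has exactly $q^{r-1}$ elements; these are precisely the $y$-coordinates of the affine points of $\mathcal X_{q,r}(\F_{q^r})$ lying above $x = a_0$. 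The hypothesis $\evmap_D(f) = 0$ forces $f(a_0, y)$ to vanish at each of these $q^{r-1}$ values, and since its degree in $y$ is at most $q^{r-1}-1$, it is identically zero.

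Finally, I would let $a_0$ vary. The previous step says $P_b(a_0) = 0$ for every $b$ and every $a_0 \in \F_{q^r}$. Because $\deg P_b \leq (q^r-1)/(q-1) - 1 = q + q^2 + \cdots + q^{r-1} < q^r$ while $P_b$ has $q^r$ roots, $P_b$ must be identically zero, yielding $c_{a,b} = 0$ for all indices in the lemma. The argument is essentially routine; the only point that requires care is the pair of strict degree inequalities $\deg_y f(a_0,y) < q^{r-1}$ and $\deg P_b < q^r$. Both are built into the ranges of $a$ and $b$ in the lemma, which is precisely why these ranges (matching the exponents in the Riemann-Roch basis \eqref{nt_RR}) are the natural ones for the norm-trace curve and where the main thought goes in setting up the argument.
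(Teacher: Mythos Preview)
Your argument is correct. The two-stage ``too many roots'' argument works exactly as you describe: the fiber of $\mathcal X_{q,r}$ over any $a_0\in\F_{q^r}$ has precisely $q^{r-1}$ points (because $\Norm(a_0)\in\F_q$ and $\Tr$ is $\F_q$-linear and surjective), so $f(a_0,y)$, of $y$-degree at most $q^{r-1}-1$, vanishes identically; then each $P_b$ has $q^r$ roots but degree at most $(q^r-1)/(q-1)-1<q^r$, hence is zero.

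This is a genuinely different route from the paper. The paper argues algebraically via Gr\"obner bases: the kernel of $\evmap_D$ is the ideal generated by the curve equation together with $x^{q^r}-x$ and $y^{q^r}-y$; under a monomial order with $y^{q^{r-1}}>x^{(q^r-1)/(q-1)}$, the curve equation and $x^{q^r}-x$ form a Gr\"obner basis with leading terms $y^{q^{r-1}}$ and $x^{q^r}$, so every $M_{a,b}$ in the stated range is a standard monomial and no nontrivial linear combination lies in the kernel. Your approach trades that machinery for the explicit fiber structure of the norm-trace curve, which makes it more elementary and entirely self-contained. The Gr\"obner argument, on the other hand, identifies the full kernel at once and would extend immediately to characterize \emph{all} linear dependencies among arbitrary monomials, not just those in the given range.
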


\begin{proof}
Drawing inspiration from the proof of Proposition 5 of \cite{lifted}, we observe that the kernel of the evaluation map $\evmap$ is generated by
$x^{\frac{q^r-1}{q-1}} - y^{q^{r-1}} - \cdots - y^q - y$,
$x^{q^r}-x$,
and
$y^{q^r}-y$. 
Under monomial orderings with $x^{\frac{q^r-1}{q-1}} < y^{q^{r-1}}$, $$\left\{ x^{\frac{q^r-1}{q-1}} - y^{q^{r-1}} - \cdots - y^q - y, x^{q^r}-x \right\}$$ is a Gr\"obner basis for the kernel of the evaluation map, and so the evaluations of $M_{a,b}$ cannot contain any element from the kernel of the evaluation map. Thus, the evaluations of $M_{a,b}$ are linearly independent.
\end{proof}

\begin{thm} \label{code_thm}
The norm-trace-lifted code $C(D, \mathcal F_{q,r})$ over $\F_{q^r}$ is a code of length $q^{2r-1}$, dimension at least $$  \frac{1}{2} q^{r/2-1} \left(q^{r-1}-(q-1) (r-1) q^{r/2-1}+1\right) \left(q^{r/2}-q r+q+r-1\right),$$ locality $q^{r-1}-(r-1)(q-1)q^{\frac{r-2}{2}}-2$, and availability $q^r-1$ with rate approaching $\frac{1}{2q}$ as the code length grows for a fixed characteristic.
\end{thm}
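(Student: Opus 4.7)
The plan is to verify each of the four claims—length, locality and availability, dimension, and asymptotic rate—in turn, leaning on the tools already developed.

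The length equals the number of affine $\FF_{q^r}$-rational points on $\mathcal{X}_{q,r}$. Since $\Tr \colon \FF_{q^r} \to \FF_q$ is a surjective $\FF_q$-linear map with kernel of size $q^{r-1}$, each of the $q^r$ elements $a \in \FF_{q^r}$ admits exactly $q^{r-1}$ values of $b \in \FF_{q^r}$ with $\Tr(b) = \Norm(a)$, giving $q^{2r-1}$ affine points, all in the support of $D$. The locality and availability are then immediate consequences of Proposition~\ref{prop_B} applied to $\mathbb{L} = \mathbb{L}_{q^r}$ and $B = B_{q,r}$: its hypothesis is exactly the content of Corollary~\ref{lower_bound_cor}, so we obtain locality $B_{q,r} - 1$ and availability $|\FF_{q^r}^\times| = q^r - 1$.

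For the dimension lower bound I exhibit a large linearly independent family of codewords arising from monomial evaluations. A monomial $M_{a,b} = x^a y^b$ with $a + b \leq B_{q,r} - 2$ is automatically good for $\mathcal{F}_{q,r}$: its restriction to $L_{\alpha,\beta}$ is the polynomial $x^a(\alpha x + \beta)^b$ in $x$ of degree $a + b \leq B_{q,r} - 2$, and since $a + b < \deg m_{\alpha,\beta} = (q^r - 1)/(q - 1)$, no reduction modulo $m_{\alpha,\beta}$ occurs. Because $B_{q,r} - 2 < q^{r-1} \leq (q^r - 1)/(q-1) - 1$, each such pair $(a, b)$ also lies within the box required by Lemma~\ref{monoLemma}, so the associated evaluations at the points of $D$ are linearly independent. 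Counting the admissible pairs and manipulating via the identity $q^{(r-2)/2}(q^{r/2} - (q-1)(r-1)) = B_{q,r} + 1$ yields the displayed factored closed form. As a consistency check, Proposition~\ref{no_sporadic} applies with $d = q^{r-1}$ (the second-highest degree appearing in the defining polynomial of $\mathcal{X}_{q,r}$) and $B_{q,r} - 1 \leq q^{r-1}$, confirming that no sporadic good monomials are being overlooked.

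Finally, the rate statement follows from the dimension asymptotics. For fixed $q$, as $r \to \infty$ the correction term $(r - 1)(q - 1)q^{(r-2)/2}$ grows only as $q^{r/2}$ and is negligible compared to $q^{r-1}$, so $B_{q,r} \sim q^{r-1}$; the dimension bound is asymptotic to $B_{q,r}^2/2 \sim q^{2r-2}/2$, and dividing by the length $q^{2r-1}$ gives a rate tending to $1/(2q)$. The main obstacle is the algebraic simplification matching the stated factored expression; the remainder of the argument is straightforward assembly of Proposition~\ref{prop_B}, Corollary~\ref{lower_bound_cor}, and Lemma~\ref{monoLemma}.
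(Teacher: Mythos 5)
Your argument follows the paper's proof essentially verbatim: Corollary~\ref{lower_bound_cor} combined with the interpolation argument of Proposition~\ref{prop_B} gives locality $B_{q,r}-1$ and availability $q^r-1$, the good monomials $x^ay^b$ with $a+b\le B_{q,r}-2$ together with Lemma~\ref{monoLemma} give the dimension bound, and the same asymptotics yield the rate $\tfrac{1}{2q}$. The only caveat --- one the paper's own proof shares --- is that the number of pairs with $a+b\le B_{q,r}-2$ is $\tfrac12 B_{q,r}(B_{q,r}-1)$ rather than $\tfrac12(B_{q,r}+1)(B_{q,r}+2)$ as the displayed factored form asserts, an off-by-two bookkeeping shift that does not affect the asymptotic rate.
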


 \begin{proof}
Given an erasure in the coordinate corresponding to an affine point $P_{ab}$ on $\mathcal{X}_{q,r}$, consider the set of lines $\mathcal L$ through $P_{ab}$. According to Corollary \ref{lower_bound_cor}, any  line $L \in \mathcal L$ intersects $\mathcal{X}_{q,r}$ in at least $B_{q,r}-1$ other affine points. Moreover, any function $f \in \mathcal{F}_{q,r}$ has the property that $f_{\mid L} \equiv g$ where $\deg g \leq B_{q,r}-2$. We claim that the $B_{q,r}-1$ affine points in the intersection $L \cap {X}_{q,r} $ other than $P_{ab}$ may be used to interpolate and find $g$; that is, we claim that $L \cap {X}_{q,r} \setminus \{ P_{ab} \}$ is a recovery set for the coordinate associated with $P_{ab}$. Evaluating $g(P_{ab})$ allows for recovery of the erased coordinate using $B_{q,r}-1$ points. Hence, the locality of $C(D, \mathcal F_{q,r})$ with this choice of recovery sets is $B_{q,r}-1= q^{r-1}-(r-1)(q-1)q^{\frac{r-2}{2}}-2$. Because there are $q^r-1$ such lines $L$, $C(D, \mathcal F_{q,r})$ has availability $q^r-1$. 

 Notice that the set $S\coloneqq \left\{ x^ay^b: a+b \leq B_{q,r}-2 \right\}$ is a set of monomials good for $\mathcal{F}_{q,r}$
 of cardinality 
\[
\frac{1}{2}
\left(q^{r-1}-(r-1)(q-1)q^{\frac{r-2}{2}} \right)
\left(q^{r-1}-(r-1)(q-1)q^{\frac{r-2}{2}}+1\right).
\]

According to Lemma \ref{monoLemma}, $S$ is linearly independent, demonstrating that $C(D, \mathcal F_{q,r})$ has dimension at least 
\[
\frac{1}{2}
\left(q^{r-1}-(r-1)(q-1)q^{\frac{r-2}{2}} \right)
\left(q^{r-1}-(r-1)(q-1)q^{\frac{r-2}{2}}+1\right).
\]
and rate
\[
\frac{
\frac{1}{2}
\left(q^{r-1}-(r-1)(q-1)q^{\frac{r-2}{2}} \right)
\left(q^{r-1}-(r-1)(q-1)q^{\frac{r-2}{2}}+1\right)
}{q^{2r-1}} \rightarrow \frac{1}{2q}>0
\]
as $r \rightarrow \infty$. 
 \end{proof}
 
Note that the rate is bounded away from $0$ for fixed characteristic as the degree of the extension grows. Hence, the codes over fields of small characteristic provide the best asymptotic rates. We also observe that taking functions that reduce to low degree polynomials on all lines (rather than just some) provides the largest availability given by the lifted construction, since the lines are the recovery sets. We may also consider fewer lines with more refined intersection numbers, as in the next subsection.

\subsection{Constructions with high availability and larger dimension}

As suggested by Proposition \ref{prop_B}, 
we may adapt the collections of lines $\mathcal L$ used in Definition \ref{ntl_code_def} and the value $B$ to a more refined bound on the intersection numbers. Consider
$$
B_{q,r}'\coloneqq   q^{r-1}-\left( \gcd(r,q-1)-1+(r-1)(q-2)\right)q^{(r-2)/2}-1
$$
and the collection of lines $$\mathbb L_{q^r}' = \left\{ L_{\alpha,\beta}: \Tr(\beta) \neq 0 \right\}.$$ Then set $\mathcal F_{q,r}'\coloneqq \mathcal{F}_{\mathbb L_{q^r}', B_{q,r}'}$ so that 
$$\mathcal{F}_{q,r}' =  \left\lbrace f \in \F_{q^r}[x,y] : \exists g \in \F_{q^r}[t]_{\leq B_{q,r}'-2} \text{ with } f \circ L_{\alpha, \beta} \equiv g \text{ for all } L_{\alpha, \beta} \in \mathbb{L}_{q^r}',  \right\rbrace.$$ We say 
the {\bf refined norm-trace-lifted code} is
$${C}\left(D, \mathcal F_{\mathbb L_{q^r}', B_{q,r}'} \right) = \lbrace \evmap(f) : f \in \mathcal{F}'_{q,r} \rbrace \subseteq \F_{q^r}^{n}.$$ We say that 
$M_{a,b}(x,y)$ is  {good} for $\mathcal{F}'_{q,r}$ if for all lines $L_{\alpha, \beta} \in \mathbb{L}_{q,r}$,
$ \deg_{\alpha, \beta}\left(M_{a,b} \circ L_{\alpha, \beta} \right) \leq B_{q,r}'.$

Notice that $M_{a,b}(x,y)$ is good for $\mathcal{F}'_{q,r}$ for each 
$(a,b) \in \N^2$ with  $a+b \leq B_{q,r}'$. Consequently, we have the following result.

\begin{prop}
The refined norm-trace-lifted code   $C'_{q,r}=\evmap \left( \mathcal{F}'_{q,r}\right)$ is a code over $\F_{q^r}$  of length $q^{2r-1}$,  locality $q^{r-1}-\left( \gcd(r,q-1) -1+(r-1)(q-2)\right)q^{\frac{r-2}{2}}-2$, and availability $q^r-q^{r-1}-2$ with rate bounded away from $0$ as the code length grows for a fixed characteristic.
\end{prop}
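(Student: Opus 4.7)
The plan is to mirror the proof of Theorem~\ref{code_thm}, substituting the sharper intersection bound of Theorem~\ref{bounds_thm} that is valid when $\Tr(\beta)\neq 0$. The length is immediate, since $D$ is supported on all $q^{2r-1}$ affine $\F_{q^r}$-rational points of $\mathcal X_{q,r}$. For the locality, the case $\Tr(\beta)\neq 0$ of Theorem~\ref{bounds_thm} yields $n_{q,r}(\alpha,\beta)\geq B'_{q,r}$ for every $L_{\alpha,\beta}\in \mathbb L'_{q^r}$, so any such line through an affine point $P_{ab}$ meets $\mathcal X_{q,r}$ in at least $B'_{q,r}-1$ other $\F_{q^r}$-rational points. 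Since every $f\in \mathcal F'_{q,r}$ restricts along $L_{\alpha,\beta}$ to a polynomial of degree at most $B'_{q,r}-2$, Lagrange interpolation at these $B'_{q,r}-1$ known values recovers $f(P_{ab})$, and $(L\cap \mathcal X_{q,r})(\F_{q^r})\setminus\{P_{ab}\}$ is a recovery set of size $B'_{q,r}-1$.

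For availability, I would count the admissible lines through a fixed affine point $P_{ab}$ with $a\neq 0$. Writing such a line as $y = \alpha x + (b-\alpha a)$, membership in $\mathbb L'_{q^r}$ amounts to $\Tr(\alpha a)\neq \Tr(b)$; the map $\alpha \mapsto \Tr(\alpha a)$ is a surjective $\F_q$-linear form on $\F_{q^r}$, so the locus $\Tr(\alpha a)=\Tr(b)$ has size $q^{r-1}$. Removing this locus together with the forbidden slope $\alpha = 0$ leaves at least $q^r-q^{r-1}-2$ admissible nonzero slopes, irrespective of whether $\Tr(b)$ vanishes. Distinct slopes through $P_{ab}$ produce lines that meet only at $P_{ab}$, so the corresponding recovery sets are pairwise disjoint, establishing the stated availability.

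For the rate lower bound, I would exhibit $\{M_{a,b} : a+b\leq B'_{q,r}-2\}$ as a family of good monomials for $\mathcal F'_{q,r}$: each composition $M_{a,b}\circ L_{\alpha,\beta}$ has degree $a+b \leq B'_{q,r}-2$ as a polynomial in $t$, well below $\deg m_{\alpha,\beta}=(q^r-1)/(q-1)$, so the reduction is trivial. Since $B'_{q,r}-2 < q^{r-1}-1$ and $B'_{q,r}-2<(q^r-1)/(q-1)-1$, the indices satisfy the hypotheses of Lemma~\ref{monoLemma}, making the evaluations linearly independent. This yields $\dim C'_{q,r} \geq \tfrac12(B'_{q,r}-1)B'_{q,r}$; dividing by $n=q^{2r-1}$ and using $B'_{q,r} = q^{r-1}(1+o(1))$ as $r\to\infty$ with $q$ fixed gives rate bounded below by $\tfrac{1}{2q}+o(1) > 0$. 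The main obstacle is the availability bookkeeping: the points $(0,b)\in \mathcal X_{q,r}(\F_{q^r})$ necessarily satisfy $\Tr(b)=\Norm(0)=0$, so \emph{no} line in $\mathbb L'_{q^r}$ passes through them, and these degenerate points must be handled separately (e.g., excluded from $D$ at a negligible cost to the length and without affecting the asymptotic rate) for the locality and availability statements to be fully rigorous.
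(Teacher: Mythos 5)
Your proof is correct and follows essentially the same route as the paper, which simply asserts that the argument mirrors that of Theorem~\ref{code_thm} with $B_{q,r}'$ and $\mathbb L_{q^r}'$ in place of $B_{q,r}$ and $\mathbb L_{q^r}$. Your further observation that no line of $\mathbb L_{q^r}'$ passes through the points $(0,b)$ --- since any such line has $\Tr(\beta)=\Tr(b)=\Norm(0)=0$ --- so that these $q^{r-1}$ coordinates must be excluded from $D$ for the locality and availability claims to be rigorous, is a genuine catch that the paper's one-line proof overlooks.
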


\begin{proof}
The proof is similar to that of Theorem \ref{code_thm}.
\end{proof}

In the next subsection, we draw some comparisons between these codes and other families. 

\subsection{Comparisons}

Recall that the norm-trace-lifted codes defined over fields of arbitrary characteristic arise from the set of functions $\mathcal{F}_{q,r}$ which depend on the degree bounds given in Theorem \ref{bounds_thm}. These bounds hold for any $q$. Tighter bounds may  give rise to codes with better parameters. 

We begin this subsection with a demonstration of this. 

\begin{ex}
    Consider a curve-lifted code on the norm-trace curve $\mathcal X_{2,r}$ over $\F_{2^r}$. First consider 
$C(D,\left( \mathcal{F}_{\mathbb L_{2^r}, 2^{r-1}-(r-1)2^{(r-2)/2}} \right))$. 
    Recall that Theorem \ref{bounds_thm} gives 
$$n_{2,r} \geq 2^{r-1}-(r-1)2^{(r-2)/2}$$
whereas
$$n_{2,r}(\alpha, \beta) =2^{r-1} \pm 1,$$ as shown in Proposition \ref{lemma:intwithlines}. 
    Hence, a code with larger dimension is obtained by considering 
    the binary norm-trace-lifted code
$C(D,\left( \mathcal{F}_{\mathbb L_{2^r}, B_{2,r}} \right))$, which was also studied in \cite{MM_ntl_binary}. We see that  $[2^{2r-1},(0.25 - \varepsilon_r) \cdot 2^{2r-1},\geq 2^r]$ code with locality ${2^{r-1}-2}$, availability ${2^r-1}$, and asymptotic rate  $0.25$ \cite[Theorem 3]{MM_ntl_binary}. 

To increase the dimension further, we may take $\mathbb L_{2^r}''\coloneqq \left\{ L_{\alpha, \beta}: Tr(\beta) = 0 \right\}$. In doing so, according to the proof of Proposition \ref{lemma:intwithlines}, we obtain a code $C (D, \mathcal F_{\mathbb L_{2^r}'', 2^{r-1}+1})$. We will see that  
$$
\left< \evmap_B(x^a y^b) : a+b = 2^{r-1}-1, 2^{r-1}-2  \right> \in C (D, \mathcal F_{\mathbb L_{2^r}'', 2^{r-1}+1}) \setminus C(D,\left( \mathcal{F}_{\mathbb L_{2^r}, B_{2,r}} \right)).
$$

Further comparisons are captured in Table \ref{param_table}.
\end{ex}

   \begin{table}[h]
    \begin{tabular}{|l|c|c|c|c|}
        \hline
        & 1-pt norm-trace & HLC & NTLC & RNTLC \\
        \hline
        locality &  $2^{r-1}-2$ & $2^{r / 2}$ & $2^{r-1}-2$ & $2^{r-1}$ \\
        availability & $2^r-1$ & $2^r-1$ & $2^r - 1$ & $2^{r-1} - 1$ \\
        length  & $2^{2r-1}$ & $2^{3r / 2}$ & $2^{2r-1}$ & $2^{2r-1}$ \\
        dimension  & $\leq 2^{2r-4}-1$ & $\geq 0.007 \cdot 2^{3r/2}$ & $(0.25 - \varepsilon_r) \cdot 2^{2r-1}$  & $(0.25 - \varepsilon_r) \cdot 2^{2r-1}$\\
        asymptotic rate  & $\leq \frac{1}{8} - \frac{1}{2^{2r-1}}$ & $\geq 0.007$ & $0.25 $  & $0.25$  \\
        \hline
    \end{tabular}
    \caption{Parameters of one-point norm-trace, Hermitian-lifted, binary norm-trace-lifted, and refined binary norm-trace-lifted codes over $\F_{2^r}$; see also {\cite{Murphy_dissertation}}}
    \label{param_table}
        \end{table}

\begin{rmk}
We recognize that while the binary norm-trace-lifted codes $C(D,\left( \mathcal{F}_{\mathbb L_{2^r}, B_{2,r}} \right))$ have dimension exceeding their counterparts $C(D,\left( \mathcal{F}_{\mathbb L_{2^r}, 2^{r-1}-(r-1)2^{(r-2)/2}} \right))$ defined from the same curve, their asymptotic rates behave similarly. Moreover, there is a tradeoff in locality which is larger for the codes $\evmap\left( \mathcal{F}_{b,2,r} \right)$ than $\evmap\left( \mathcal{F}_{2,r} \right)$, meaning more symbols are needed in order to perform recovery. 
\end{rmk}

It is also worth comparing the codes introduced in this paper with their Hermitian counterparts. A crucial distinction arises if we wish to consider the codes concretely in terms of bases or generator matrices. According to the next result, these are readily available for norm-trace-lifted codes with $r>2$.

\begin{cor} \label{no_spor_nt}
    For $r>2$, the norm-trace-lifted code $C(D, \mathcal F_{q,r})$ over $\F_{q^r}$ is defined exclusively by typical monomials, meaning 
    $$
    \left\{ \evmap \left( x^ay^b \right): a + b \leq B_{q,r} -2 \right\}
    $$ is a basis for $C(D, \mathcal F_{q,r})$.
\end{cor}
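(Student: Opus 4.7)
My approach is to apply Proposition \ref{no_sporadic} to the norm-trace curve and then combine the absence of sporadic good monomials with Lemma \ref{monoLemma}. The defining polynomial of $\mathcal{X}_{q,r}$ is
\[
F(x,y) = x^{(q^r-1)/(q-1)} - y^{q^{r-1}} - y^{q^{r-2}} - \cdots - y^q - y.
\]
Its top-degree term is $x^{(q^r-1)/(q-1)}$ of degree $q^{r-1} + q^{r-2} + \cdots + 1$, while its second-highest-degree term is $y^{q^{r-1}}$, so in the notation of Proposition \ref{no_sporadic} we have $d = q^{r-1}$. The line $L_{1,0}$ clearly belongs to $\mathbb{L}_{q^r}$, so the only remaining hypothesis of Proposition \ref{no_sporadic} to verify is $B_{q,r} - 1 \leq d$.

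For this numerical check: when $q = 2$, $B_{2,r} - 1 = 2^{r-1} - 2 < 2^{r-1}$; when $q > 2$, the positivity of $(r-1)(q-1)q^{(r-2)/2}$ for $r \geq 2$ gives $B_{q,r} - 1 = q^{r-1} - (r-1)(q-1)q^{(r-2)/2} - 2 < q^{r-1}$. In either case, $B_{q,r} - 1 \leq q^{r-1} = d$. Proposition \ref{no_sporadic} then yields that $\mathcal{F}_{q,r}$ has no sporadic good monomials, so the complete set of good monomials reduces to exactly the typical ones, namely $\{x^a y^b : a + b \leq B_{q,r} - 2\}$.

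To conclude the basis claim, I invoke Lemma \ref{monoLemma}. Since $a + b \leq B_{q,r} - 2 < q^{r-1} \leq (q^r-1)/(q-1)$, each typical good monomial satisfies both $a \leq (q^r-1)/(q-1) - 1$ and $b \leq q^{r-1} - 1$, so its evaluation lies in the linearly independent system of Lemma \ref{monoLemma}. The proof of Theorem \ref{code_thm} already used these monomials to realize the stated dimension lower bound on $C(D, \mathcal{F}_{q,r})$. Paired with the absence of sporadic contributions guaranteed by Proposition \ref{no_sporadic}, the set $\{\evmap(x^a y^b) : a + b \leq B_{q,r} - 2\}$ both spans $C(D, \mathcal{F}_{q,r})$ and is linearly independent, and hence is a basis.

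The main step is the identification of $d = q^{r-1}$ and the elementary inequality $B_{q,r} - 1 \leq q^{r-1}$, which is where the hypothesis $r > 2$ provides the room needed (since for $r = 2$ the bound $B_{q,r}$ is too weak, as noted in Remark \ref{rmk_r_2_bad_bounds}). The only delicate point is the spanning assertion: this relies on the fact that $C(D, \mathcal{F}_{q,r})$ is generated by evaluations of good monomials — a feature inherited from the structural setup for the Hermitian-lifted case and from the proof of Theorem \ref{code_thm}, so once sporadic monomials are ruled out there are no further generators to consider.
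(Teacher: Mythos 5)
Your proof is correct and follows the same route as the paper: the paper's own proof is the one-line observation that Proposition~\ref{no_sporadic} applies because $q^{r-1} \geq B_{q,r}$, and your identification of $d = q^{r-1}$ (the degree of the second-highest-degree term $y^{q^{r-1}}$ of $F$), the verification that $B_{q,r}-1 \leq d$, and the appeal to Lemma~\ref{monoLemma} for linear independence are exactly the content being invoked. The spanning subtlety you flag --- that $C(D,\mathcal{F}_{q,r})$ is generated by evaluations of good monomials, so that ruling out sporadic ones suffices --- is left entirely implicit in the paper's proof as well, so your treatment is, if anything, the more careful of the two.
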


\begin{proof}
    The result follows immediately from Proposition \ref{no_sporadic}, since $q^{r-1} \geq B_{q,r}$. 
\end{proof}

Notice that Proposition \ref{no_sporadic} does not apply in the Hermitian case, because $r=2$. In particular, $q \leq B = q+1$. 

\begin{rmk}
        Consider the code $C(D, \mathcal F_{\mathbb L_{q^r}, n_{q,r}})$ on the norm-trace curve $\mathcal X_{q,r}$ over $\F_{q,r}$, where 
$$
n_{q,r}:=\min \left\{ n_{q,r}(\alpha, \beta) : \alpha, \beta \in \F_{q^r} \right\}. 
$$
While (to our knowledge) at present there is not a closed form expression for $n_{q,r}$, values can be computed as in Table \ref{values}. Using the precise intersection numbers for odd $q$ produces higher dimensional codes.
\end{rmk}

We now provide examples to illustrate this fact. 

\begin{ex}
 Consider the curve $\mathcal X_{7,3}$ over $\F_{343}$. According to Table \ref{values}, $n_{q,r}=43$. 
 The curve-lifted code $C(D, \mathcal F_{\mathbb L_{343}, 43})$ includes codewords given by evaluating $x^ay^b$, $a+b \leq 41$, whereas the norm-trace-lifted code given by the bound $n_{7,3} \geq 16$ only considers those with $x^ay^b$, $a+b \leq 14$, according to Corollary \ref{no_spor_nt}.
\end{ex}
  
\begin{table}[h]
\begin{tabular}{|c|c|c|c|c|}
\hline
$p$ & $r$ &  $\#(L_{\alpha,\beta}\cap \mathcal{X}_{p,r})(\FF_{p^r})$& $B_{p,r}$ & $B_{p,r}'$ \\\hline 
 3& 2& 1, 4& 0& 0 \\ \hline
 3& 3& 13, 7, 10& 1& 4 \\ \hline
 3& 4& 22, 28, 31 & 8& 14 \\ \hline
 3& 5& 73, 76, 85, 91 & 38& 59 \\ \hline 
 3& 6&229, 244, 256& 152& 188 \\ \hline
 3& 7& 703, 715, 742, 757& 540& 634 \\ \hline
 5& 2& 1, 6& 0& 0 \\ \hline
 5& 3&  21, 26,31& 6& 10 \\ \hline
 5& 4& 111, 121, 126 141& 64& 64 \\ \hline
 5& 5& 561, 611, 621, 626, 641,  681& 445& 489 \\ \hline
 5& 6& 3056, 3106, 3126, 3131, 3206& 2624& 2724 \\ \hline
 5& 7& 15631, 15751, 15731, 15501, 15681, 15456& 14282& 14617 \\ \hline
 7& 2& 1, 8 & 0& 0 \\ \hline
 7& 3& 43, 50, 57& 16& 16 \\ \hline
 7& 4& 351, 358, 316, 379, 337& 216& 230 \\ \hline
 7& 5& 2451, 2325, 2465, 2381, 2437, 2395, 2353, 2402& 1955& 2029 \\ \hline
 7& 6& 16773, 16738, 17053, 16843, 16801, 16633, 16808& 15336& 15336 \\ \hline
 7& 7& \begin{tabular}{@{}c@{}} 117433, 117615, 118693, 117580,\\ 118063, 116173, 117895, 117853, 117685, 117643, 117475\end{tabular} & 112980& 113758 \\ \hline
 \end{tabular}
\caption{Collection of actual intersection numbers $n_{q,r}$ for all $\alpha, \beta$ and bounds}
\label{values}
\end{table}

\section{Conclusion} \label{conclusion_section}

In this paper, we defined curve-lifted codes which allow for local recovery by taking as repair groups the points of intersection of the curve with lines through the evaluation points. While inspired by Hermitian-lifted codes, they may exhibit different behaviors depending on the particular defining curve. We demonstrate when such codes have an explicit basis arising from typical monomial, unlike the Hermitian case where sporadic monomials are needed. In addition, we determined bounds on the number of affine points of intersection between a norm-trace curve and a line. We then used them to define norm-trace-lifted codes over fields of arbitrary characteristic. These new codes have high availability and positive rate, bounded away from zero as the code length goes to infinity. We note that codes over fields of small characteristic provide the best asymptotic rates. The opportunity to obtain greater rate by evaluating more functions may motivate one to consider more tailored bounds for the intersection numbers.


\begin{thebibliography}{10}

\bibitem{allen2023improving}
Austin Allen, Eric Pabón-Cancel, Fernando Piñero-González, and Lesley
  Polanco.
\newblock Improving the dimension bound of {H}ermitian lifted codes, 2023.

\bibitem{barg_tamo_vladut_17}
Alexander Barg, Itzhak Tamo, and Serge Vlăduţ.
\newblock Locally recoverable codes on algebraic curves.
\newblock {\em IEEE Transactions on Information Theory}, 63(8):4928--4939,
  2017.

\bibitem{magma}
Wieb Bosma, John Cannon, and Catherine Playoust.
\newblock The {M}agma algebra system. {I}. {T}he user language.
\newblock {\em J. Symbolic Comput.}, 24(3-4):235--265, 1997.
\newblock Computational algebra and number theory (London, 1993).

\bibitem{Garcia_Vianna}
A.~Garcia and P.~Viana.
\newblock Weierstrass points on certain non-classical curves.
\newblock {\em Arch. Math}, 46:315--322, 1986.

\bibitem{geil_03}
Olav Geil.
\newblock On codes from norm–trace curves.
\newblock {\em Finite Fields and Their Applications}, 9(3):351--371, 2003.

\bibitem{gopalan_12}
Parikshit Gopalan, Cheng Huang, Huseyin Simitci, and Sergey Yekhanin.
\newblock On the locality of codeword symbols.
\newblock {\em IEEE Transactions on Information Theory}, 58(11):6925--6934,
  2012.

\bibitem{Goppa_83}
V.~D. {Goppa}.
\newblock {Algebraico-Geometric Codes}.
\newblock {\em Izvestiya: Mathematics}, 21(1):75--91, February 1983.

\bibitem{fiber}
Kathryn Haymaker, Beth Malmskog, and Gretchen~L. Matthews.
\newblock Locally recoverable codes with availability $t \geq 2$ from fiber
  products of curves.
\newblock {\em Advances in Mathematics of Communications}, 12(2):317--336,
  2018.

\bibitem{Katz93}
Nicholas~M. Katz.
\newblock Estimates for {S}oto-{A}ndrade sums.
\newblock {\em J. Reine Angew. Math.}, 438:143--161, 1993.

\bibitem{lifted}
Hiram~H. L{\'o}pez, Beth Malmskog, Gretchen~L. Matthews, Fernando Pi\~{n}ero
  Gonz\'{a}lez, and Mary Wootters.
\newblock {H}ermitian-lifted codes.
\newblock {\em Designs, Codes and Cryptography}, 89:497–515, 2021.

\bibitem{MM_ntl_binary}
Gretchen~L. Matthews and Aidan~W. Murphy.
\newblock Norm-trace-lifted codes over binary fields.
\newblock In {\em 2022 IEEE International Symposium on Information Theory
  (ISIT)}, pages 3079--3084, 2022.

\bibitem{Moisio}
Marko Moisio.
\newblock Kloosterman sums, elliptic curves, and irreducible polynomials with
  prescribed trace and norm.
\newblock {\em Acta Arith.}, 132(4):329--350, 2008.

\bibitem{MW10}
Marko Moisio and Daqing Wan.
\newblock On {K}atz's bound for the number of elements with given trace and
  norm.
\newblock {\em J. Reine Angew. Math.}, 638:69--74, 2010.

\bibitem{Castle}
Carlos Munuera, Alonso Sep{\'u}lveda, and Fernando Torres.
\newblock Algebraic geometry codes from castle curves.
\newblock In {\'A}ngela Barbero, editor, {\em Coding Theory and Applications},
  pages 117--127, Berlin, Heidelberg, 2008. Springer Berlin Heidelberg.

\bibitem{Murphy_dissertation}
Aidan~W. Murphy.
\newblock {\em Codes from norm-trace curves: local recovery and fractional
  decoding}.
\newblock PhD thesis, Virginia Tech, 2022.

\bibitem{dimakis_12}
Dimitris~S. Papailiopoulos and Alexandros~G. Dimakis.
\newblock Locally repairable codes.
\newblock In {\em 2012 IEEE International Symposium on Information Theory
  Proceedings}, pages 2771--2775, 2012.

\bibitem{Schmidt}
F.~K. Schmidt.
\newblock Zur arithmetischen theorie der algebraischen funktionen. ii.
  allgemeine theorie der weierstraßpunkte.
\newblock {\em Mathematische Zeitschrift}, 45:75--96, 1939.

\bibitem{TVZ}
M.~A. Tsfasman, S.~G. Vlădutx, and Th. Zink.
\newblock Modular curves, {S}himura curves, and {G}oppa codes, better than the
  {V}arshamov-{G}ilbert bound.
\newblock {\em Mathematische Nachrichten}, 109(1):21--28, 1982.

\end{thebibliography}
\end{document}